\g@addto@macro\bfseries{\boldmath}
\g@addto@macro\mdseries{\unboldmath}
\g@addto@macro\normalfont{\unboldmath}
\g@addto@macro\rmfamily{\unboldmath}
\g@addto@macro\upshape{\unboldmath}
\renewcommand*{\multicitedelim}{\addcomma\space}
\newcommand{\myhref}[1]{%
  \iffieldundef{doi}
    {\iffieldundef{url}
       {#1}
       {\href{\strfield{url}}{#1}}}
    {\href{http://dx.doi.org/\strfield{doi}}{#1}}%
}
    \newlength{\temp@x}%
    \newlength{\temp@y}%
    \newlength{\temp@w}%
    \newlength{\temp@h}%
    \def\my@coords#1#2#3#4{%
      \setlength{\temp@x}{#1}%
      \setlength{\temp@y}{#2}%
      \setlength{\temp@w}{#3}%
      \setlength{\temp@h}{#4}%
      \adjustlengths{}%
      \my@pdfliteral{\strip@pt\temp@x\space\strip@pt\temp@y\space\strip@pt\temp@w\space\strip@pt\temp@h\space re}}%
      \def\my@pdfliteral#1{\pdfliteral page{#1}}
      \def\adjustlengths{}%
      \def\my@pdfliteral #1{}
      \def\adjustlengths{\setlength{\temp@h}{-\temp@h}\addtolength{\temp@y}{1in}\addtolength{\temp@x}{-1in}}%
    \def\Hy@colorlink#1{%
      \begingroup
        \ifHy@ocgcolorlinks
          \def\Hy@ocgcolor{#1}%
          \my@pdfliteral{q}%
          \my@pdfliteral{7 Tr}
        \else
          \HyColor@UseColor#1%
        \fi
    }%
    \def\Hy@endcolorlink{%
      \ifHy@ocgcolorlinks%
        \my@pdfliteral{/OC/OCPrint BDC}%
        \my@coords{0pt}{0pt}{\pdfpagewidth}{\pdfpageheight}%
        \my@pdfliteral{F}
        %
        \my@pdfliteral{EMC/OC/OCView BDC}%
        \begingroup%
          \expandafter\HyColor@UseColor\Hy@ocgcolor%
          \my@coords{0pt}{0pt}{\pdfpagewidth}{\pdfpageheight}%
          \my@pdfliteral{F}
        \endgroup%
        \my@pdfliteral{EMC}%
        \my@pdfliteral{0 Tr}
        \my@pdfliteral{Q}%
      \fi
      \endgroup
    }%
\colorlet{DarkRed}{red!50!black}
\colorlet{DarkGreen}{green!50!black}
\colorlet{DarkBlue}{blue!50!black}
\declaretheorem[numberwithin=section]{theorem}
\declaretheorem[numberlike=theorem]{lemma}
\declaretheorem[numberlike=theorem]{corollary}
\declaretheorem[numberlike=theorem]{definition}
\declaretheorem[numberlike=theorem]{claim}
\title{Deterministic Algorithms for Decremental Approximate Shortest Paths: Faster and Simpler}
\author{
Maximilian Probst Gutenberg\thanks{BARC, University of Copenhagen, Universitetsparken 5, Copenhagen 2100, Denmark, The author is supported by Basic Algorithms Research Copenhagen (BARC), supported by Thorup's Investigator Grant from the Villum Foundation under Grant No. 16582.}~\thanks{Work done in part while
visiting the Massachusetts Institute of Technology, Massachusetts, US. The author is supported by STIBOFONDEN’s IT Travel Grant for PhD Students.}
\and
Christian Wulff-Nilsen\thanks{Department of Computer Science, University of Copenhagen. This research is supported by the Starting Grant 7027-00050B from the Independent Research Fund Denmark under the Sapere Aude research career programme.}
}
\date{\today}
\newcommand{\adist}[2][]{\widetilde{\mathbf{dist}}_{#1}(#2)}
\newcommand{\dist}[2][]{\mathbf{dist}_{#1}(#2)}
\begin{document}

\maketitle

\begin{abstract}
In the decremental $(1+\epsilon)$-approximate Single-Source Shortest Path (SSSP) problem, we are given a graph $G=(V,E)$ with $n = |V|, m = |E|$, undergoing edge deletions, and a distinguished source $s \in V$, and we are asked to process edge deletions efficiently and answer queries for distance estimates $\widetilde{\mathbf{dist}}_G(s,v)$ for each $v \in V$, at any stage, such that $\mathbf{dist}_G(s,v) \leq \widetilde{\mathbf{dist}}_G(s,v) \leq (1+ \epsilon)\mathbf{dist}_G(s,v)$. In the decremental $(1+\epsilon)$-approximate All-Pairs Shortest Path (APSP) problem, we are asked to answer queries for distance estimates $\widetilde{\mathbf{dist}}_G(u,v)$ for every $u,v \in V$. In this article, we consider the problems for undirected, unweighted graphs.

We present a new \emph{deterministic} algorithm for the decremental $(1+\epsilon)$-approximate SSSP problem that takes total update time $O(mn^{0.5 + o(1)})$. Our algorithm improves on the currently best algorithm for dense graphs by Chechik and Bernstein [STOC 2016] with total update time $\tilde{O}(n^2)$ and the best existing algorithm for sparse graphs with running time $\tilde{O}(n^{1.25}\sqrt{m})$ [SODA 2017] whenever $m = O(n^{1.5 - o(1)})$.

In order to obtain this new algorithm, we develop several new techniques including improved decremental cover data structures for graphs, a more efficient notion of the heavy/light decomposition framework introduced by Chechik and Bernstein and the first clustering technique to maintain a dynamic \emph{sparse} emulator in the deterministic setting.

As a by-product, we also obtain a new simple deterministic algorithm for the decremental $(1+\epsilon)$-approximate APSP problem with near-optimal total running time $\tilde{O}(mn /\epsilon)$ matching the time complexity of the sophisticated but rather involved algorithm by Henzinger, Forster and Nanongkai [FOCS 2013]. 
\end{abstract}

\pagebreak

\section{Introduction}

Computing shortest paths in a graph is one of the most classic and well-studied areas in theoretical computer science and algorithms to solve this problem efficiently have found countless applications and are classic subjects of undergraduate algorithm courses, also appearing in CLRS \cite{cormen2009introduction}.

In this article, we focus on maintaining shortest paths in an undirected unweighted \emph{dynamic} graph $G=(V,E)$ that is a graph that is subject to edge insertions/deletions. In particular, we say that a graph undergoing edge insertions is \emph{incremental}, a graph undergoing edge deletions is \emph{decremental} and a graph undergoing edge insertions \emph{and} deletions is \emph{fully-dynamic}. If the graph is either decremental or incremental, it is also referred to as \emph{partially-dynamic}. The goal in this setting is to design a data structure that provides an update operation that takes a single edge as a parameter and then updates the graph and the data structure such that the data structure can be queried for distances in the new version of the graph. 

We focus on two problems: the single-source shortest path (SSSP) problem for \emph{partially-dynamic} graphs where we are given a dedicated source vertex $s \in V$ with the dynamic graph $G=(V,E)$ and where distance queries have to take the form $\mathbf{dist}_G(s,v)$ for any $v \in V$; and the all-pairs shortest path (APSP) problem that we consider only on \emph{decremental} graphs where distance queries can be of any form, i.e. any distance $\mathbf{dist}_G(u,v)$ for any $u,v \in V$ can be queried. In this article, we relax the requirement that distance queries have to be answered exactly and only require that answers to queries are $(1+\epsilon)$-approximate, that is we return some distance estimate $\widetilde{\mathbf{dist}}(u,v)$ that satisfies $\mathbf{dist}(u,v) \leq \widetilde{\mathbf{dist}}(u,v) \leq (1+\epsilon)\mathbf{dist}(u,v)$ where we present algorithms for any $\epsilon > 0$.

Both problems are well-motivated in their setting to model network and navigation problems and appear often as data structures for the more complex fully-dynamic settings or to solve subproblems in static or other dynamic algorithms, for example to compute multi-commodity flows  \cite{madry2010faster}, max-flow and sparsest cuts \cite{Chuzhoy:2019:NAD:3313276.3316320}, to find augmenting paths in matchings \cite{bernstein2018online}, to compute light spanners  \cite{alstrup2017constructing}, or to maintain the diameter and diameter spanners of a partially-dynamic graph \cite{ancona2018algorithms, DBLP:journals/corr/abs-1812-01602}.

We further study the efficient maintenance of dynamic emulators which are central to all current SSSP algorithms for sparse dynamic graphs. In this paper, given a graph $G=(V,E)$, we define a $(d, \epsilon)$-emulator $H$ to be a weighted graph on the vertex set $V$ that has a path from $u$ to $v$ of at most $d$ edges and weight at most $(1+\epsilon)\mathbf{dist}_{G}(u,v)$ for any $u,v \in V$. We further define a $(d, \epsilon)$-hopset $H$ to be a weighted graph on $V$ such that $G \cup H$ is a $(d,\epsilon)$-emulator of $G$. We call hopsets and emulators dynamic, if they are themselves dynamic graphs that satisfy the above-stated property for a dynamic graph $G$ at any stage. 

To ease stating results, we henceforth let $n$ denote the number of vertices in a dynamic graph and $m$ refer to the total number of edges that are in any version of the graph $G$. For algorithms working on weighted graphs, we further denote by $W$ the aspect ratio of the graph (i.e. the largest edge weight divided by the smallest). Whilst results in fully-dynamic graphs are normally stated by the running time of a single update, incremental and decremental algorithms are normally judged by their total running time that is the sum of all update times. We point out that all existing algorithms have logarithmic or constant query time.

\subsection{Related work}

\paragraph{Shortest paths in static graphs.} For static graphs, the exact SSSP problem for positive edge weights can be solved by Dijkstra using Fibonacci heaps in $O(m + n \log n)$ time  \cite{cormen2009introduction} or even in $O(m)$ time using Thorup's algorithm  \cite{thorup1999undirected}. The APSP problem can be solved by running Thorup's algorithm from each source in $O(mn)$ time (for slightly faster algorithms consult \cite{williams2014faster, chan2016deterministic}). The $O(mn)$ time bound is widely believed to be the optimal combinatorial running time up to subpolynomial factors\footnote{Various faster algorithm are known using Matrix Multiplication (for example \cite{zwick2002all, le2012faster}), however these algorithms have not proven to be practically efficient.}. In fact, even improving the running time for approximate APSP for multiplicative stretch $(1 + \epsilon)$ and additive stretch $(2-\epsilon)$ for any $\epsilon \in [0,1]$ by a truly polynomial factor implies major breakthroughs for several long-standing problems \cite{dor2000all, williams2018subcubic, lincoln2018tight}.

\paragraph{Decremental SSSP.} For decremental graphs, we can simply rerun each of these algorithms after every edge deletion incurring an additional multiplicative factor of $O(m)$ in the total running time. The first algorithm to improve on this trivial bound was presented for the exact SSSP problem for unweighted graphs by Even and Shiloach \cite{shiloach1981line} that takes $O(mn)$ total update time and which was extended to work on directed graphs by Henzinger and King \cite{henzinger1995fully}. This data structure often referred to as ES-tree has ever since become a fundamental tool in dynamic graph algorithms. However, recent hardness results \cite{roditty2004dynamic, abboud2014popular, henzinger2015unifying} imply that improving this fundamental barrier would imply a breakthrough on various long-standing problems.

Due to these hardness results, the research community has turned to $(1+\epsilon)$-approximate algorithms for the decremental SSSP problem. After improving the total update time for decremental undirected graphs to $O(n^{2+o(1)})$ \cite{bernstein2011improved}, Henzinger, Forster and Nanongkai developed a near-linear time algorithm taking time $O(m^{1+o(1)} \log W)$ \cite{henzinger2014decremental} for decremental \emph{weighted} graphs. Unfortunately, all these algorithms rely on heavy randomization and the assumption of an \textit{oblivious} adversary which implies that adversarial edge updates to the graph have to be fixed before the algorithm is started. This assumption limits severely the area of applications as it implies that the data structures cannot be used as a black box. 

To overcome this problem, Bernstein and Chechik \cite{chechik2016decremental} presented the first deterministic algorithm running in $\tilde{O}(n^2)$ total time. They also presented an algorithm for sparse graphs \cite{bernstein2017deterministic} with update time $\tilde{O}(n^{1.25}\sqrt{m})=\tilde{O}(mn^{3/4})$. Bernstein \cite{bernstein2017deterministicweighted} later extended the algorithm for dense graphs to handle edge weights in $[1,W]$ with total update time $\tilde{O}(n^2 \log W)$. Each of these algorithms can answer queries for distance estimates in constant time but their fundamental technique of contracting vertices in dense components of the graphs makes it impossible to retrieve a corresponding path. Recently, Chuzhoy and Khanna showed in \cite{Chuzhoy:2019:NAD:3313276.3316320} that the algorithm for weighted dense graphs by Bernstein can be extended to obtain an algorithm with $O(n^{2+o(1)} \log W)$ total update time where approximate shortest-paths can be retrieved in $O(n \log W)$ time. However, their algorithm is randomized, working against an adaptive adversary, and only works in the setting of vertex deletions. 

For the directed setting, Henzinger, Forster and Nanongkai presented an algorithm with running time $O(mn^{0.9 + o(1)}\log W)$ \cite{henzinger2014sublinear, henzinger2015improved} that works against an oblivious algorithm. Recently, Probst Gutenberg and Wulff-Nilsen \cite{probstWulffNilsenDiSSSP} achieved improved running time $\tilde O(mn^{3/4}\log W)$ (with better bounds for a wide range of graph densities) and showed that there exists an algorithm with update time $O(m^{3/4}n^{5/4}\log W)$ that works against an adaptive adversary, however, their latter algorithm cannot report paths. 

\paragraph{Decremental APSP problem.} For the decremental APSP problem, a plethora of algorithms is known \cite{king1999fully, baswana2002improved, demetrescu2004new, roditty2004dynamic,thorup2005worst, bernstein2011improved,  roditty2012dynamic, abraham2013dynamic, henzinger2014decremental, bernstein2016maintaining, henzinger2016dynamic, abraham2017fully, chechik2018near, brand2019dynamic, probstWulffNilsenwcAPSP}. We want to point out in particular the exact deterministic decremental APSP algorithm for weighted digraphs by Demetrescu and Italiano \cite{demetrescu2004new} with running time $\tilde{O}(n^3)$ and the deterministic $(1+\epsilon)$-approximate decremental APSP algorithm by Henzinger, Forster and Nanongkai \cite{henzinger2016dynamic} with total update time $\tilde{O}(mn)$ on undirected, unweighted graphs. These two algorithms dominate all known deterministic approaches for the decremental APSP problem in running time and generality.

\paragraph{Hopsets and Emulators.} Cohen \cite{cohen2000polylog} introduced the notion of hopsets and presented an efficient algorithm to compute a sparse $(O(\text{polylog}(n)), \epsilon)$-hopset $H$ of size $\tilde{O}(n)$ (with a small additive error). After the initiation of the field, Bernstein \cite{bernstein2009fully} observed first that the emulators presented by Thorup and Zwick \cite{thorup2006spanners} are in fact $(n^{o(1)}, \epsilon)$-emulators that can be efficiently maintained and successively obtained an efficient dynamic APSP algorithm upon the emulator instead of the input graph. Since this breakthrough, hopsets have become a fundamental technique for decremental SSSP and have been used in all algorithms for decremental $(1+\epsilon)$-approximate SSSP designed for sparse graphs. Recently, Elkin and Neiman \cite{elkin2016hopsets} presented $(O(1), \epsilon)$-hopsets inspired by techniques for $(1+\epsilon, \beta)$-spanners by Elkin and Peleg \cite{elkin20041} and Pettie and Huang \cite{huang2018thorup} proved that the Thorup-Zwick emulators are universally-optimal hopsets that match the bounds given by Elkin and Neiman. 

\subsection{Our Contributions}

Our main result is a new \emph{deterministic} algorithm for the $(1+\epsilon)$-approximate SSSP problem that runs in $O(mn^{0.5 + o(1)})$ total update time. 

\begin{theorem}
\label{thm:mainSSSP}
There exists a deterministic data structure, for any $\epsilon > 1/\text{polylog}(n)$, that given an unweighted undirected partially-dynamic graph $G=(V,E)$ and a dedicated source, can process edge deletions in total update time $O(mn^{0.5+ O(1/\sqrt{\log n})})$ and can return a distance estimate $\widetilde{\mathbf{dist}}(s,v)$ such that \[
{\mathbf{dist}}(s,v) \leq \widetilde{\mathbf{dist}}(s,v) \leq (1+\epsilon){\mathbf{dist}}(s,v)
\]
for any $v \in V$ in constant worst case time.
\end{theorem}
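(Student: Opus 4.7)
I reduce the SSSP problem to $O(\log n)$ distance scales via a standard patching argument: at each scale $d^{\star} = 2^i$ it suffices to maintain a $(1+\epsilon)$-approximation of all distances $\mathbf{dist}_G(s,v) \in [d^{\star}, 2d^{\star}]$, and the scale-wise data structures combine with an additional $\log n$ factor. Within a single scale I would maintain a dynamic emulator $H$ on vertex set $V$ with $|H| \leq n^{1+o(1)}$ edges such that every pair $u,v$ admits a path in $G \cup H$ of at most $D = n^{0.5+O(1/\sqrt{\log n})}$ edges and weight at most $(1+\epsilon)\mathbf{dist}_G(u,v)$. A hop-bounded Even-Shiloach tree from $s$ on $G \cup H$ up to hop depth $D$ then runs in $O((m+|H|) \cdot D) = O(m \cdot n^{0.5+O(1/\sqrt{\log n})})$ total time across all deletions, which is the target bound.

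The heart of the construction is the deterministic maintenance of $H$. I would build the emulator in $k = \sqrt{\log n}$ doubling-radius levels, analogous to Thorup-Zwick emulators but without randomization. At level $i$, the algorithm maintains a cover of the currently light vertices by Bernstein-Chechik-style ball-growing: each candidate center greedily extends a ball of radius $r_i$ until the boundary layer is sufficiently sparse. Vertices absorbed into a level-$i$ ball are declared \emph{heavy} at that level and are represented in all later levels only through their cluster center, and we add an emulator edge of weight $2r_i$ between adjacent centers. Setting the heavy threshold close to $n^{1/2}$ bounds the emulator size by $n^{1+o(1)}$, bounds the hop-diameter of $G \cup H$ by $n^{0.5+o(1)}$ through the doubling structure, and forces the subgraph that each level's local ES-tree examines to have balls of size at most $n^{1/2}$, so each level's cover maintenance totals $O(mn^{0.5+o(1)})$ work under deletions when charged against a monotone potential equal to the sum of cluster radii.

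The main obstacle will be controlling the cascade across levels: a single edge deletion can collapse clusters at multiple levels simultaneously, and unlike randomized Thorup-Zwick emulators where random pivot selection amortizes the rebuilds, here every re-clustering event must be explicitly charged to a monotone quantity. I would handle this with a level-indexed potential where the potential at level $i$ is designed to dominate the downstream re-clustering cost at all later levels, so the cross-level work amortizes inductively; this is precisely the role of the improved decremental cover data structure advertised in the abstract. Once the emulator's maintenance cost fits within the $O(mn^{0.5+o(1)})$ budget, the $(1+\epsilon)$-approximation reduces to showing that every shortest path in $G$ decomposes into at most $D$ alternating heavy/light segments, each of which is shadowed either by a single emulator edge between cluster centers or by a short path in the local light subgraph. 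Balancing $k = \sqrt{\log n}$, the heavy threshold $n^{1/2}$, and the doubling factor across levels yields the claimed $O(mn^{0.5 + O(1/\sqrt{\log n})})$ total update time.
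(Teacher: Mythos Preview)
Your high-level plan (scale reduction, sparse dynamic emulator per scale, hop-bounded tree on top) matches the paper's architecture, but the proposal has a genuine gap precisely at the step you treat as routine: the claim that ``a hop-bounded Even--Shiloach tree from $s$ on $G\cup H$ up to hop depth $D$ then runs in $O((m+|H|)\cdot D)$ total time.'' That bound is the standard ES analysis for a \emph{decremental} graph, but your $H$ is fully dynamic: cluster centers appear, change level, and disappear, so emulator edges are inserted as well as deleted. For an MES-tree the cost is governed not by the instantaneous edge count $|H|$ but by $\sum_{(u,v)\in E_{ALL}}\Delta(u,v)$, where $E_{ALL}$ collects every edge ever inserted and $\Delta(u,v)$ is the level change of the assigned endpoint while the edge is alive. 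In the paper's construction $|E_{ALL}|$ is already $\Theta(n\mu)=\Theta(m\sqrt{n})$, not $n^{1+o(1)}$; worse, a single active center can carry $\Theta(n)$ incident emulator edges while its distance estimate climbs by $\Theta(D)$, and this pattern can recur at different centers, so a naive MES bound blows up to $\Theta(mn)$. The paper devotes two sections to this obstacle, introducing an \emph{almost}-MES-tree with a new ``dragging'' rule: whenever an active center's estimate rises, it forcibly raises the estimates of all vertices in its core, so the expensive high-degree increases are amortized against many cheap low-degree increases via a credit scheme tied to $|\textsc{Core}|\ge n^{-1/k}|\textsc{Cluster}|$. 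Your proposal neither identifies this issue nor supplies a substitute mechanism.

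A second, smaller gap concerns cover maintenance. You gesture at ``a monotone potential equal to the sum of cluster radii,'' but the real difficulty with deterministic covers is that once the graph disconnects, cover centers can become isolated and the cover size is no longer controlled. The paper's fix is to run the entire cover and emulator machinery on $G^*$, the graph in which edge deletions that would disconnect $G$ are simply ignored; this keeps all balls non-trivial and makes each cover layer a monotonically growing set, after which one argues separately that distances in $G^*$ suffice for connected pairs in $G$. Without this (or an equivalent device) your greedy ball-growing will not stay within budget across the full deletion sequence.
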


This, improves the algorithms by Bernstein and Chechik \cite{bernstein2016deterministic, bernstein2017deterministic, bernstein2017deterministicweighted} over the entire sparsity $m = O(n^{1.5 - o(1)})$ and dominates it heavily for the important case when $m = O(n)$ where our algorithm improves the current running time of $\tilde{O}(n^{1+3/4})$ to just $O(n^{1.5 + o(1)})$ total update time. This matches a natural barrier to the problem encountered by all current approaches as pointed out by Bernstein and Chechik \cite{bernstein2017deterministic}. We present the algorithm for the more challenging decremental setting but point out that an adaption to the incremental setting is straight-forward. Unfortunately, as all previous deterministic algorithms, we cannot return approximate shortest-paths.

Further, we present a new algorithm for $(1+\epsilon)$-approximate APSP.

\begin{theorem}
\label{thm:mainAPSP}
There exists a deterministic data structure, for any $\epsilon > 0$, that given an unweighted undirected decremental graph $G=(V,E)$ and a dedicated source, can process edge deletions in total update time $O(mn\log n/\epsilon)$ and can return a distance estimate $\widetilde{\mathbf{dist}}(u,v)$ such that \[
{\mathbf{dist}}(u,v) \leq \widetilde{\mathbf{dist}}(u,v) \leq (1+\epsilon){\mathbf{dist}}(u,v)
\]
for any $u,v \in V$ in $O(\log \log n)$ worst case time.
\end{theorem}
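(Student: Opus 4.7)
The plan is to build the data structure as a hierarchy of truncated Even--Shiloach (ES) trees rooted at deterministically maintained covers, one cover per logarithmic distance scale, combined via a ``center-in-between'' composition.

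For each $i = 0, 1, \ldots, \lceil \log n \rceil$, set $d_i = 2^i$. Using the improved deterministic decremental cover data structure developed elsewhere in the paper, I would maintain a center set $C_i \subseteq V$ of size $\tilde{O}(n/(\epsilon d_i))$ together with, for every $v \in V$, a pointer $c_v^{(i)} \in C_i$ satisfying $\mathbf{dist}(v, c_v^{(i)}) \leq \epsilon d_i$. For each $c \in C_i$ maintain a standard ES-tree $T_{c,i}$ rooted at $c$ and truncated at depth $D_i := \lceil (1+3\epsilon)\,d_i \rceil$. Since one such tree incurs lifetime cost $O(m D_i) = O(m d_i)$, the total cost telescopes to
\[
\sum_{i=0}^{\lceil \log n \rceil} |C_i| \cdot O(m d_i) \;=\; \tilde{O}\!\left( \sum_i \frac{n}{\epsilon d_i} \cdot m d_i \right) \;=\; O(mn \log n / \epsilon),
\]
matching the claimed update time once the cover maintenance itself is shown to fit in this budget.

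For a query on $(u,v)$, return $\min_i \widetilde{\mathbf{dist}}_i(u,v)$ where $\widetilde{\mathbf{dist}}_i(u,v) = \mathbf{dist}_{T_{c,i}}(c,u) + \mathbf{dist}_{T_{c,i}}(c,v)$ with $c = c_u^{(i)}$, or $+\infty$ if either endpoint has left the tree. For correctness, let $d := \mathbf{dist}(u,v)$ and let $i^\star$ be the smallest index with $d_{i^\star} \geq d$. Then $c := c_u^{(i^\star)}$ satisfies $\mathbf{dist}(c,v) \leq d + \epsilon d_{i^\star} \leq (1+2\epsilon)d \leq D_{i^\star}$, and symmetrically for $\mathbf{dist}(c,u)$, so both tree lookups return exact distances and the estimate is at most $d + 2\epsilon d_{i^\star} \leq (1+4\epsilon)d$; rescaling $\epsilon$ by a constant gives stretch $1+\epsilon$. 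Since the predicate ``$v$ lies in $T_{c_u^{(i)},i}$'' is monotone in $i$, the smallest valid scale can be located by binary search over the $O(\log n)$ scales, yielding the $O(\log \log n)$ worst-case query time.

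The main obstacle is the deterministic cover: producing, under edge deletions, center sets of size $\tilde{O}(n/(\epsilon d_i))$ with up-to-date representative pointers at slack $\epsilon d_i$, and absorbing their total maintenance cost into the $O(mn \log n / \epsilon)$ budget. I would import this as a black box provided by the ``improved decremental cover data structure'' highlighted in the contributions; once this is in hand, the APSP algorithm is a clean composition with standard truncated ES-trees, which explains the ``simpler'' adjective in the claim and the substantial shortening relative to the layered construction of Henzinger--Forster--Nanongkai.
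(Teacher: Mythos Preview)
Your high-level architecture---maintain covers at geometric distance scales, hang truncated ES-trees off the cover centers, and answer a query by binary-searching for the smallest scale at which $v$ falls inside the tree of $u$'s center---is exactly the paper's approach, and your parameterization (cover radius $\epsilon d_i$, tree depth $\Theta(d_i)$, cover size $\tilde O(n/(\epsilon d_i))$) is equivalent to the paper's (cover radius $2^j$, tree depth $\Theta(2^j/\epsilon)$, cover size $O(n/2^j)$) under a change of index.

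What you are missing is the one idea that actually makes the cover maintainable in the claimed time and that the paper singles out as its contribution here: the cover is \emph{not} maintained on $G$ but on $G^*$, the graph in which every edge deletion that would disconnect the graph is simply skipped. On $G$ itself, when a component splits, centers can become stranded in the wrong piece and the size bound $|C_j|=O(n/2^j)$ breaks (small components need their own centers); fixing this is exactly the ``moving ES-trees'' machinery of Henzinger--Forster--Nanongkai that the paper is trying to avoid. Working on $G^*$ keeps the graph connected, so the greedy cover stays incremental and the packing argument for its size goes through unchanged. This is the content behind your black box, and it is why the result is ``simpler.''

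But once the cover and the ES-trees live in $G^*$, your correctness argument needs an extra step that you do not supply. The center $c=c_u^{(i)}$ need not lie in the same component of $G$ as $u$ and $v$, so the quantities $\mathbf{dist}_{T_{c,i}}(c,u)$ and $\mathbf{dist}_{T_{c,i}}(c,v)$ are distances in $G^*$, not in $G$, and the triangle-inequality lower bound $\widetilde{\mathbf{dist}}(u,v)\ge \mathbf{dist}_G(u,v)$ is not immediate. The paper therefore (i) first checks whether $u$ and $v$ are connected in $G$ via a separate deterministic connectivity structure, returning $\infty$ if not, and (ii) observes that when $u$ and $v$ \emph{are} connected in $G$, one has $\mathbf{dist}_{G^*}(u,v)=\mathbf{dist}_G(u,v)$, so the triangle inequality in $G^*$ gives the lower bound; moreover any edge of $G^*\setminus G$ used on the $u$-to-$c$ leg is a bridge that must also appear on the $v$-to-$c$ leg, so the concatenated path can be shortcut to a genuine $u$--$v$ path in $G$. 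Both of these points are absent from your proposal and are precisely where the argument differs from a naive ``covers plus ES-trees'' sketch.
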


This algorithm matches the best known running time by Henzinger, Forster and Nanongkai  \cite{henzinger2016dynamic} but significantly simplifies upon their data structure and requires no sophisticated proof techniques to bound the running time. 

To achieve these results we design the first non-trivial \emph{sparse} dynamic $(d, \epsilon)$-emulator beyond the approach by Thorup and Zwick  \cite{thorup2006spanners} which is normally maintained using heavy randomization. In order to construct an efficient $(d, \epsilon)$-emulator, we refine techniques to maintain decremental cover structures, extend the heavy/light decomposition as proposed by Bernstein and Chechik \cite{bernstein2017deterministic} and introduce novel modification to ES-trees which were not considered so far and which allow us to exploit the full strength our emulator construction whilst keeping the analysis simple. We sketch and discuss these novel techniques in more detail in the next section.

\subsection{Overview and Techniques}
\label{sec:overview}

We now introduce the most important techniques used in the paper and give a high-level overview of our algorithms\footnote{In this overview, we assume basic familiarity with ES-trees as introduced in  \cite{shiloach1981line} and monotone-ES-trees (MES-tree) as for example described in  \cite{henzinger2016dynamic}. We provide a short introduction in section \ref{subsec:basicDefs} for readers unfamiliar with these data structures.}.

\paragraph{Existing Approaches for Decremental SSSP.} So far, all efficient algorithms  \cite{henzinger2014decremental, henzinger2014sublinear, bernstein2017deterministic, chechik2018near} for \emph{sparse} graphs take roughly the following approach: a dynamic $(d, \epsilon)$-emulator $H$ is constructed and maintained for $G$. Then, an MES-tree is maintained from the source vertex $s$ on $H$ instead of $G$. A simple edge rounding scheme can then reduce the running time to $O(|H|d)$ since we only need to maintain the shortest path tree of $s$ up to depth $d$\footnote{Technically some papers construct $(d, \epsilon)$-hopsets but the union of hopset and input graph $G$ can easily be seen to form an emulator.}. Thus, efficient SSSP algorithms try to find an optimal trade-off between the sparsity of $H$ (i.e. $|H|$), the properties of $H$ (i.e. aiming for a small $d$), and the time to maintain $H$ dynamically. Clearly, our description oversimplifies the rather involved algorithms but provides us with a road-map since we take the same fundamental approach. 

\paragraph{A New Approach for Efficient Emulators.} Whilst constructing dynamic emulators has obtained extensive attention, so far, only an approach based on the Thorup-Zwick emulators \cite{thorup2006spanners} has been proven to provide dynamic sparse emulators (i.e. an emulator $H$ with $|H|= O(m^{1+o(1)})$). But these techniques rely heavily on randomization and therefore do not extend to the deterministic setting. 

Here, we present the first \emph{deterministic} hopset construction and maintenance that ensures that the emulator $H$ is sparse using a density-sensitive clustering approach. Our approach is based on finding a cluster center $c$  for each vertex $v$ at some distance $\tau$ and then by adding edges from $c$ to all vertices at distance $2\tau/\epsilon + \tau$ to $H$. Thus, using only two edges in $H$, $v$ can travel to any vertex at distance $2\tau/\epsilon$ and the weight of this $2$-hop path is at most $2\tau + 2\tau/\epsilon =  2\tau/\epsilon (\epsilon + 1)$ which constitutes only a $(1+\epsilon)$ multiplicative error on the original distance.

This approach of assigning cluster centers was formerly used to construct $k$-spanners \cite{awerbuch1985complexity, peleg1989graph}, $(1+\epsilon, \beta)$-spanners \cite{elkin20041} and  universally optimal hopsets \cite{elkin2016hopsets}, however in the dynamic setting only a single recent approach has tried to adapt these techniques to the dynamic setting which again relied on mixing it with the Thorup-Zwick emulator framework \cite{chechik2018near}.

We overcome the use of randomization by carefully identifying the cluster centers using $\tau$-covers which are introduced in the next paragraph. However, we require tremendous flexibility of $\tau$-covers which no current approach provides. Therefore much of this overview shows how to obtain more robust $\tau$-covers that we can then adapt to our constraints. In particular, we show how to find cluster centers only in sparse subgraphs which can be done more efficiently than in dense graphs. To deal with dense subgraphs we then employ another variant of $\tau$-covers to contract them without inducing large additive error for shortest paths building upon the techniques of Bernstein and Chechik \cite{bernstein2017deterministic}.

In the final two paragraphs, we sketch how to use our emulator $H$ which, whilst following the fundamental approach described above, requires us to adapt MES-trees to support two new techniques which might be of independent interest.

\paragraph{Maintaining Covers on Connected Graphs.} As aforementioned, at the heart of our algorithms is a new technique to maintain covers. Formally, we say that a $\tau$-cover $C$ of a decremental graph $G=(V,E)$ is a subset of the vertices such that at any stage, for any vertex $v \in V$, there is always a vertex $w \in C$ at distance at most $\tau$ from $v$. Naturally, the $\tau$-cover $C$ forms a good set of potential cluster centers: adding edges from each vertex $c \in C$ to all vertices in $B(c, 2\tau/\epsilon + \tau)$ to $H$ guarantees that every vertex has a $2$-hop as described above. Thus, $\tau$-covers arise naturally in the context of our emulator.

To gain some intuition, let us assume that $G$ remains connected for the rest of this paragraph. We recall that ES-trees can be maintained in time $O(m\delta)$ to maintain the shortest path tree $T$ from some source $s$ to depth $\delta$ in $G$. Now, consider we want to find a $\tau$-cover $C$ for $G$. Then, we could run an ES-tree to depth $\tau$ from each vertex $v \in V$, that is we explicitly maintain its ball $B(v,\tau) = \{ u \in V | \mathbf{dist}_G(u,v) \leq \tau\}$. If some vertex $v$ has no vertex of the set $C$ in its ball $B(v,\tau)$, we add $v$ to $C$. It is straight-forward to show that $C$ can be maintained correctly in $O(n m \tau)$ total time. Further, for any two vertices $c, c' \in C$, we have $B(c, \tau/2) \cap B(c', \tau/2) = \emptyset$ by the criterion of adding a vertex to $C$ only when there is no vertex at distance $\tau$. But since these balls are all-pairwise disjoint, and by the \emph{connectedness} of the graph each ball contains at least $\tau/2$ vertices, we have by the pigeonhole principle at most $2n/\tau$ vertices in $C$. Observe that this approach enforces that $C$ is an incremental set, i.e. elements added to $C$ remain in $C$.

It is also straight-forward to layer this approach. Let $C_{0} = V$, and maintain $C_i$ for $i \in [0, \lg n]$ as follows: maintain an ES-tree to depth $2^{i-1}$ from every vertex in $C_{i-1}$. Again, if some vertex $c \in C_{i-1}$ does not have a vertex in $C_i$ in its ES-tree, add it to $C_{i}$. It is easy to prove that each $C_i$ is a $2^i$-cover of size at most $n/2^{i-3}$. Thus, the running time to maintain all $C_i$'s is $O(\sum_{i = 0}^{\lg n} O(|C_{i-1}| m 2^{i-1}) = O(mn \lg n)$. 

\paragraph{Maintaining Covers as Graphs Decompose.} Observe that our analysis crucially relied on the fact that $G$ remained connected. This is because if we allow $G$ to decompose into several connected components, vertices from which we maintain ES-trees could be isolated and this could force us to add more vertices to $C_i$ (and subsequently maintain more ES-trees). 

However, an efficient algorithm to maintain a $\tau$-cover $C$ was recently proposed by Henzinger, Forster and Nanongkai \cite{henzinger2016dynamic} using a sophisticated but complicated scheme that "moves" ES-trees to nearby vertices, if an edge update isolates a vertex that is in $C$. However, this means that $C$ becomes a \emph{dynamic} set, rather than an \emph{incremental} one. 

We show that a very simple trick can be used to maintain $\tau$-covers without "moving" ES-trees: \emph{we ignore adversarial updates that disconnect the graph}. Therefore, we define the decremental graph $G^*$ to be the decremental graph $G$ where we ignore edge deletions that disconnect the graph. Then, we claim that $G^*$ still has the desirable properties that we need and that in turn the information obtained by maintaining the $\tau$-cover on $G^*$ is useful to obtain efficient algorithms. 

For example, consider the $(1+\epsilon)$-approximate APSP problem: we can maintain $2^i$-covers as described above for each $i \in [0, \lg n]$ and maintain from each vertex in $C_i$ an ES-tree to depth ${2^i}/{\epsilon}$. Upon a query for an estimate $\widetilde{\mathbf{dist}}(u,v)$ for any $u,v \in V$, $u$ can check for each $i$, if the vertex $c_i$ closest to $u$ in $C_i$ has $v$ in its ES-tree. Taking the smallest index $i$, it can be shown that $\mathbf{dist}_{G^*}(u, c_i) + \mathbf{dist}_{G^*}(c_i, v)$ has weight at most $(1+4\epsilon)\mathbf{dist}_G(u,v)$. Let $P_{u,v}$ be the corresponding approximate shortest path from $u$ to $v$ via $c_i$. Consider first that $u,v$ and $c_i$ are all in the same connected component, then $P_{u,v}$ is contained in $G$. Otherwise, if $u$ and $v$ are not in the same component their distance is $\infty$ which we can reconstruct using a connectivity data structure. If $c_i$ is in another component than $u$ and $v$, then the path segment from $u$ to $c_i$ in $P_{u,v}$ contains some edge $(x,y)$ leaving the component of $u$, i.e. $(x,y)$ is not in $G$. But on closer inspection, also the path segment from $v$ to $c_i$ must contain the edge $(x,y)$ (otherwise $(x,y)$ would be part of a cycle which by definition implies it would be deleted from $G^*$). But then, $P_{u,v}$ contains a cycle and removing the cycle, leaves us with a simple path from $u$ to $v$ in $G$ that is of weight at most $(1+4\epsilon)\mathbf{dist}_G(u,v)$.

In fact, the described approach allows for a very concise proof for a decremental $(1+\epsilon)$-approximate algorithm for APSP that matches the best known running time: $O(mn \log n/\epsilon)$. 

\paragraph{Faster Covers by Ignoring Dense Subgraphs.} We observe that the $\tau$-cover structure presented so far still requires running time $\tilde{O}(mn)$. But for the decremental approximate SSSP problem, we aim for a significantly better total running time. To improve the running time of the $\tau$-cover structure, we combine our $\tau$-covers with a simple approach suggested by Bernstein and Chechik \cite{bernstein2017deterministic}: We \emph{ignore} dense subgraphs. To formalize dense subgraphs, we define $\mathbf{deg}(B_{G^*}(v,r))$ to denote the number of edges in $G^*$ with both endpoints in the ball $B_{G^*}(v,r)$ on the graph $G^*$. Then, let us say a vertex $v$ is $(\mu, r)$-\textit{heavy} if $\mathbf{deg}(B_{G^*}(v,r)) > \mu$, and otherwise we say $v$ is $(\mu, r)$-\textit{light}. 

We now modify the $\tau$-cover algorithm to run each ES-tree rooted at some $v$ to depth $d$ maintained in the $\tau$-cover algorithm, only if $v$ is $(\mu, d)$-\textit{light}. Consider the effect on a vertex $v$: if $v$ is $(\mu, 2\tau)$-\textit{light}, then there is a vertex $w$ at distance $\tau$ in the $\tau$-cover and that vertex is itself $(\mu, \tau)$-\textit{light} since $B(w, \tau) \subseteq B(v, 2\tau)$ (Recall also that in the previous layer cover there is a vertex $w'$ at distance at most $\tau/2$ from $v$ that joins the $\tau$-cover if there is no vertex at distance at most $\tau$ to $v$). Otherwise, there might not be a vertex close to $v$ in the cover.

This reduces the running time for each ES-tree to $O(\mu d)$ instead of $O(m d)$ and therefore the total running time to maintain the cover reduces to $\tilde{O}(n \mu)$ (see section \ref{subsec:basicDefs} for details). We henceforth denote by $C_{\mu, \tau}$ a particular instance of the set obtained by running this modified $\tau$-cover algorithm (again we only consider $\tau$'s that are powers of $2$, so there are only $O(\log n)$ of these instances). 

\paragraph{An Efficient Threshold Emulator for Sparse Subgraphs.} 
Let us now describe how to construct a dynamic weighted emulator $H_{\mu, \tau, \epsilon}$ for a certain distance threshold $\tau$ (and value $\mu$ to be fixed later such that $m/\mu \approx \sqrt{n}$), that is a dynamic graph with $(m/\mu, \epsilon)$-emulator guarantees for any vertex $u$ at a distance from $s$ in $[m\tau/\mu, 2m\tau/\mu]$. Initially, we assume that every vertex in $V$ is $(\mu, 4\tau (1/\epsilon)^{\sqrt{\log n}+1})$-\textit{light} (abbreviated simply by \textit{light} for the rest of the overview and analogously we write \textit{heavy} for vertices that are not \textit{light}).

Therefore, we maintain a hierarchy with levels $0, 1, .. , k = \lfloor \sqrt{\log n} \rfloor$ levels.  We assign each center $c \in C_{\mu, \tau}$ a level in the hierarchy denoted by $\textsc{ClusterLevel}_{\mu, \tau}(c)$. We later show that we can maintain $i = \textsc{ClusterLevel}_{\mu, \tau}(c)$ such that it is monotonically decreasing over time and satisfies that $|B_{G^*}(c, \tau(1/\epsilon)^{i+1})| \leq n^{1/k} |B_{G^*}(c, \tau(1/\epsilon)^{i})|$.

For each level $i \in [0,k]$, we then maintain a set of \textit{active} centers $A_i$ that is a maximal subset of the vertices in $C_{\mu, \tau}$ at cluster level $i$ such that no two vertices in $A_i$ are at distance less than $2\tau(1/\epsilon)^{i}$. In particular, the balls $B_{G^*}(c, \tau(1/\epsilon)^{i})$ for all $c \in A_i$ are pairwise disjoint.

Finally, we ensure that $H^s_{\mu, \tau, \epsilon}$ contains for every active center $c \in A_i$ the edges $\{c\} \times B_{G^*}(c, \tau(1/\epsilon)^{i+1})$ and the edges in $\{s\} \times B_{G^*}(s, 2\tau(1/\epsilon)^{k+1})$ where each edge is assigned weight $w(u,v) = \lceil \mathbf{dist}_{G^*}(u,v) / \epsilon\tau \rceil * \epsilon\tau$, i.e. the distance $\mathbf{dist}_{G^*}(u,v)$ rounded up to the nearest multiple of $\epsilon\tau$. This gives $2$-hops for all vertices that are sparse as described initially and some direct hops for vertices that are already sufficiently close to $s$ which are necessary to avoid a large additive error but should be seen as a technical detail.

\paragraph{Analyzing Stretch, Hop and Sparsity.} Let us offer some intuition about our clustering by sketching a proof by induction that establishes that there is a $(1+\epsilon)$-approximate shortest path from $s$ to every vertex $v \in V$ in each version of $H^s_{\mu, \tau, \epsilon}$ (plus a small additive error of $\epsilon\tau$). Let us therefore prove the claim that for every $\delta \geq 0$,
\begin{enumerate}
    \item we have for all vertices $v \in V$ at distance at most $\delta$ from $s$, that $\mathbf{dist}_{G^*}(s,v) \leq \mathbf{dist}_{H^s_{\mu, \tau, \epsilon}}(s,v) \leq (1+\epsilon)\mathbf{dist}_{G^*}(s,v) + \epsilon\tau$, and
    \item for every active center $c \in A_i$ at distance at most $\delta + 4\tau(1/\epsilon)^{i}$ from $s$, we have that $\mathbf{dist}_{H^s_{\mu, \tau, \epsilon}}(s,c) \leq \max\{(1+\epsilon)\mathbf{dist}_{G^*}(s,c) - 8\tau(1/\epsilon)^{i}, \mathbf{dist}_{G^*}(s,c)\}$.
\end{enumerate}
Clearly, for $\delta \leq 4\tau(1/\epsilon)^{k}$, we can simply use the edges in $\{s\} \times B_{G^*}(s, 2\tau(1/\epsilon)^{k+1})$ proving the base case (here we assume $\epsilon$ is sufficiently small). 

For any larger $\delta$, each \textit{active} center $c \in A_i$ can use an edge $(c,u)$ where $u$ is the vertex in $B_{G^*}(c, \tau(1/\epsilon)^{i+1})$ on the shortest path from $s$ to $c$ that is closest to $s$. By definition, $\mathbf{dist}_{G^*}(s,u) \leq \mathbf{dist}_{G^*}(s,c) - \tau(1/\epsilon)^{i+1} < \delta$ for $\epsilon \leq \frac{1}{10}$. Thus, we can invoke the induction hypothesis on $u$ and obtain that $\mathbf{dist}_{H^s_{\mu, \tau, \epsilon}}(s,c) \leq (1+\epsilon)\mathbf{dist}_{G^*}(s,u) + \epsilon\tau + \tau(1/\epsilon)^{i+1} \leq (1+\epsilon)\mathbf{dist}_{G^*}(s,c) - 8\tau(1/\epsilon)^{i}$ (where we again use $\epsilon \leq \frac{1}{10}$).

For each vertex $v \in V$, let $c$ be the closest vertex in $C_{\mu, \tau}$ (recall $\mathbf{dist}_G(v,c) \leq \tau$). Let $i = \textsc{ClusterLevel}_{\mu, \tau}(c)$, then either $c \in A_i$ or there is a vertex $c'$ in $A_i$ at distance at most $2\tau(1/\epsilon)^{i}$ from $c$ (by the maximality of $A_i$). Using the triangle inequality, we deduce that there is a vertex $c'' \in A_i$ at distance at most $2\tau(1/\epsilon)^{i} + 2\tau \leq 4\tau(1/\epsilon)^{i}$. Then using the edge to $c'$ (and by invoking the induction hypothesis on $c'$) establishes our claim. 

It is straight-forward to see from the exposed paths in our proof that we reduce the distance between $v$ and $s$ by at least $\tau$ using a 2-hop shortcut. Thus, the exposed path contains at most $2 \lceil\mathbf{dist}_{G^*}(s,v)/\tau \rceil$ edges, thus for distance $m\tau/\mu$, the number of edges is roughly $\tau/m$. To see that the emulator has at most $O(n^{1+o(1)})$ edges we point out that the careful construction of the sets $A_i$ ensures that every active center $c \in A_i$ has a set of at least $n^{i/k}$ vertices in its ball $B(c, \tau(1/\epsilon)^i)$ (which is disjoint from all balls at other vertices in $A_i$) and contributes by at most $n^{(i+1)/k}$ edges to $H^s_{\mu, \tau, \epsilon}$. Using the pigeonhole-principle and summing over the $k$ levels, we obtain the claimed bound.

\paragraph{Handling Dense Subgraphs.} The problem with this approach is that vertices that are not \textit{light} might not have \emph{any} incident edges in $H^s_{\mu, \tau, \epsilon}$ since \textit{heavy} vertices are not guaranteed to have a cluster center in the $\tau$-cover. However, since we chose the \textit{heavy}/\textit{light} threshold quite leniently, we even ensured that every vertex incident to a \textit{light} vertex has a 2-hop, i.e. has an edge to an active center of small weight that allows it to reach a vertex that is at least $\tau$ closer to $s$. Thus, we are left to expose shortcuts for \textit{heavy} vertices that are not incident to any \textit{light} vertex.

We can therefore use a simple trick introduced by Bernstein and Chechik \cite{bernstein2016deterministic}: We let $\textsc{Heavy}$ denote the set of vertices in $V$ that are \textit{heavy}. We add for each connected component $C$ in the induced graph $G^*[\textsc{Heavy}]$ a \textit{component} vertex $c$ to the vertex set $V_{\mu, \tau, \epsilon}$ of our emulator $H^s_{\mu, \tau, \epsilon}$ and add the edge $(c, v)$ with weight $\epsilon\tau$ to the edge set of $H^s_{\mu, \tau, \epsilon}$ for every vertex $v \in C$. 

Observe that this process adds at most $O(n)$ edges to our emulator. Moreover, every \textit{heavy} vertex $v$ can now use a single edge to get to its component vertex $c$. Clearly, $c$ has an incident edge to the vertex $u$ in the component $C$ that is closest to $s$ on the shortest path $\pi_{s,v}$. But since $u$ is incident to a \textit{light} vertex, we can take a $2$-hop from $u$ to some vertex $x$ that is at least $\tau$ closer to $s$ than $u$. The additive error of $2\tau\epsilon$ that we additionally accumulated to get from $v$ to $u$ can be subsumed in the multiplicative error of the $4$-hop to get from $v$ to $x$ by changing the constants. It is noteworthy that this might lead to underestimating the shortest path from $s$ to $v$ since the distance from $u$ to $v$ might be significantly larger than $2\epsilon\tau$. We therefore have to bound the induced underestimate of any shortest path but by using an advanced analysis, we can show that this incurs at most $O(\frac{\tau m}{\mu})$ (negative) additive error.

\paragraph{Maintaining the heavy/light decomposition.} We point out that the fix offered in the last paragraph requires us to monitor for every vertex $v \in V$, when it transitions from being \textit{heavy} to becoming \textit{light}. Unfortunately, maintaining this information takes $O(n\mu \tau(1/\epsilon)^{k+1})$ time which is too slow for our purposes. We therefore present a relaxation of the heavy/light decomposition that preserves most desirable properties whilst reducing the running time to $\tilde{O}(n\mu)$ where we again employ cover $C_{\mu, \tau}$. This allows us to scale our approach.

\paragraph{Running MES-trees on Threshold Emulators.} In order to use the threshold emulator to find distances from $s$, we maintain another layering with levels $i = [0, \lg n]$ where we want to maintain at each layer the distances $\mathbf{dist}_{G^*}(s,v) \in [2^i, 2^{i+1}]$. We therefore maintain a MES-tree rooted at $s$ to depth $\Theta(2^{i})$ on $H^s_{\mu, \tau_i, \epsilon}$ where we set $\tau_i = \frac{\mu 2^i \epsilon^{\Theta(k)}}{m}$. (Here we ignore the case when $\tau < 1$ since we can use a simple ES-tree from $s$ to find these distances exactly in time $O(m \frac{m}{\mu})$).

Observe, by preceding arguments, the emulator underestimates shortest paths by an additive error of $O(\frac{\mu \tau_i}{m}) = O(\epsilon 2^i)$. However, since we are only interested in distances that are greater-equal than $2^i$, the additive error can be subsumed in an $(1+\epsilon)$ multiplicative error (by adding $O(\epsilon 2^i)$ to each distance estimate). Thus, all relevant distances estimates maintained are $(1+\epsilon)$-approximate. 

Let us briefly sketch the running time of this approach. Following our previous arguments for $\tau$-covers, maintaining all $C_{\mu, \tau}$ for $\tau$ being powers of $2$ takes $\tilde{O}(n\mu)$ time. Maintaining for each $i$, a MES-tree on $H^s_{\mu, \tau_i, \epsilon}$ to depth $O(2^i)$ can be bounded by $O(n^{1+o(1)} \frac{m}{\mu})$ since we can divide edge weights by $\epsilon\tau_i$ in each instance which decreases the total update time by a fraction of $\epsilon\tau_i$. Running an exact ES-tree for small distances takes $O(\frac{m^{2 + o(1)}}{\mu})$. Thus, setting $\mu = \frac{m}{\sqrt{n}}$ optimizes the trade-off and gives $O(m n^{0.5+o(1)})$ total running time.

\paragraph{Using {almost}-MES-trees.} Whilst previous algorithms used MES-trees directly on the emulator, our emulators would induce high running time \emph{and} additive error if an MES-tree would be run naively on it. Whilst we can still preserve the time and stretch bounds stated, this requires new techniques to handle edge insertion/deletions in $H^s_{\mu, \tau_i, \epsilon}$.

To understand why difficulties arise in the first place, it is crucial to keep in mind that ES-trees/MES-trees scan the in-going edges of a vertex for every increase of their distance estimate once. Whilst our emulators have only $O(n^{1+o(1)})$ edges, this is a holistic bound taat does not bound the number of edges incident to a vertex. Thus, our emulator could place $\Theta(n)$ edges at a vertex $v$ and increase its distance estimate by $\Delta$. The induced costs would amount to $O(\Delta n)$. Then, the emulator could remove the edges from $v$ and insert them at some vertex $v'$ and repeat this scenario. 

To foil this scenario, we introduce a new \uline{dragging} technique where active centers also increase the distance estimates of vertices that are "close" to it. The fundamental idea is that since a vertex $v$ in the ball $B(c, \tau(1/\epsilon)^i)$ of some active center $c$ is "close" to $c$, then if $c$'s distance estimate increases by a lot (say more than $2\tau(1/\epsilon)^i$), we can also increase $v$'s distance estimate. Thus, the vertex $c$ \uline{drags} vertices in its ball along to amortize the costs of increasing its distance estimate due to being a high-degree vertex over distance estimate increases of many low-degree vertices in its ball $B(c, \tau(1/\epsilon)^i)$.

Finally, our emulator also requires some vertices to decrease their distance estimates occasionally to avoid incurring too much additive error. We show that the decreases do not affect the running time up to constant factors. For the reason of these occasional decreases, we call the data structure used instead of the MES-tree an \emph{almost}-MES-tree.

These two new techniques of dragging vertices along and allowing for occasional level decreases might be of independent interest and we hope that they are studied in further detail. However, we refrain from stating more details in the overview since the description of these techniques is rather technical.

\section{Background}
\label{subsec:basicDefs}

A dynamic graph $H$ is a sequence of graphs on the same vertex set $V(H)$ where the edge sets of two consecutive graph versions differ only by at most a single edge. We refer to $H$ at the stage $i$ to denote the graph version $H$ after the first $i$ updates, thus $H$ at stage $0$ is the initial graph. To denote $H$ at the current stage (i.e. the stage under discussion) we often omit stating the version explicitly. In this paper, if not explicitly pointed out, we are only concerned with undirected, unweighted graphs. If for all $i > 0$, the graph $H$ at stage $i$ is a subgraph of the graph $H$ at stage $i-1$, we say that $H$ is a decremental graph. Throughout the paper, we assume that any decremental graph $H$ is initially fully-connected.

For the decremental input graph $G$, we let $m$ denote the cardinality of the initial edge set, i.e. the edge set of $G$ at stage $0$, and $n$ refer to the cardinality of its vertex set.

For $u,v \in V$, we let $\mathbf{conn}(u,v)$ be true if and only if there is a path from $u$ to $v$ in graph $G$. We henceforth denote by $G^*$ the decremental graph $G$ where edge deletions resulting in an increased number of connected components in $G$ are skipped. Therefore $G^*$ is fully-connected at every stage. Using a Deterministic Connectivity Data Structure on $G$ (for example  \cite{holm2001poly, Wulff-Nilsen:2013:FDF:2627817.2627943}), we can maintain $G^*$ introducing an overhead of $O(m \log^2 n)$ total time as follows: we pick an arbitrary vertex $r \in V$ and assign each vertex $x \in V$ a component representative $r(x)$ which is initially set to $r$. If we discover that the deletion of an edge $(u,v)$ disconnects $u$ from $v$, we query the Connectivity Data Structure for the size of the components $C_u$ and $C_v$ with $u$ contained in $C_u$ and $v$ in $C_v$. Let w.l.o.g. $C_u$ be the smaller component, then we assign each vertex $x \in C_u$ the new representative $r(x) = u$. This takes only $O(n \log{n})$ overall time since each time a component needs to be scanned, the size of the component halves at least which is subsumed by the update time incurred by the Connectivity Data Structure into account. We can then query $\mathbf{conn}(u,v)$ at any stage in constant time by checking if $r(u) = r(v)$.

We let $\dist[H]{u,v}$ denote the shortest path distance from $u$ to $v$ in graph $H$ and let $\pi_{u,v}$ denote a shortest path between $u$ and $v$ (or some consistently chosen shortest path if there are more than a single one). We sometime say the \textit{length} or \textit{weight} of a path $\pi$ to denote the sum of all edge weights of edges on $\pi$. We argue that for the decremental graph $G$, we have for all stages $i$ where $u,v \in V$ are connected in $G$ that $\dist[G]{u,v} = \dist[G^{*}]{u,v}$ and otherwise $\dist[G]{u,v} = \infty$. Thus, $\dist[G]{u,v} \leq \dist[G^{*}]{u,v}$. Moreover, for any vertex $c \in V$ and given $u,v \in V$ where $u$ and $v$ are connected in $G$, we have that if $\dist[G^{*}]{u,c} \leq d$ for any $d \in [0,n]$, then $\dist[G^{*}]{v,c} \leq \dist[G^{*}]{u,v} + \dist[G^{*}]{u,c} \leq \dist[G]{u,v} + d$. This holds even if $c$ is not in the same component as $u$ or $v$ in $G$.

We further define for any vertex $v \in V$ in the decremental graph ${G}$, the ball 
\[
B_G(v, r) = \{u \in V | \dist[G]{u,v} \leq r\},
\]
where we drop the subscript when the context is clear. We let $\mathbf{deg}(B_{G^*}(v,r))$ denote the number of edges with both endpoints in the set $B_{G^*}(v,r)$. This allows us to define a simplified version of the heavy/light decomposition that was used before in  \cite{bernstein2017deterministic}. 

\begin{definition}
\label{def:heavy}
For any $\mu > 0$, we say a vertex $v \in V$ is $(\mu, r)$-\textit{heavy} if $\mathbf{deg}(B_{G^*}(v,r)) > \mu$, otherwise we say $v$ is $(\mu, r)$-\textit{light}. 
\end{definition}

Observe that our definition is in regard to ${G}^*$, not $G$. It is straight-forward to extend the ES-tree to detect the transition of a vertex from being $(\mu, r)$-\textit{heavy} to being $(\mu, r)$-\textit{light} and to extend the data structure to support some additional operations.

\begin{lemma} [c.f.  \cite{bernstein2017deterministic}]
\label{lma:maintainBalls}
For any vertex $s \in V$, we can maintain a data structure on a decremental graph $G^*$ that can report when $s$ transitions from $(\mu, r)$-\textit{heavy} to $(\mu, r)$-\textit{light} and once $s$ is $(\mu, r)$-\textit{light} support the following operations:
\begin{itemize}
    \item It can report when a vertex $w \in B(s,r)$ has increased its distance and be queried for any distance $\mathbf{dist}(s,w)$ for $w \in V$.
    \item Initially given a partition $\mathcal{P}$ of $[0,r]$, it can maintain for each set $P \in \mathcal{P}$ the set of vertices at one of the distance in $P$, i.e. the set
    \[
        V(P) = \{ \mathbf{dist}(s,v) \in P | v \in V\}
    \]
    and report any changes to $P$.
    \item It can support updates to \textit{mark} or \textit{unmark} vertices in $B(s,r)$ and support queries on whether there is a marked vertex in $B(s,r)$.
\end{itemize}
For $\Delta$ updates and queries to the data structure, the total time for updates, queries and reporting changes is $O(\mu r + \Delta)$. Further, the data structure can return a shortest path $\pi_{s,t}$ for any $t \in B(s,r)$ in $O(|\pi_{s,t}|)$ time. Observe that $\pi_{s,t}$ is a shortest path in $G^*$. 
\end{lemma}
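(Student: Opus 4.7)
The plan is to augment the standard Even–Shiloach (ES) tree rooted at $s$ with depth cutoff $r$, extending it with extra bookkeeping for the three auxiliary operations and with a mechanism that defers instantiating the tree until $s$ has become $(\mu,r)$-light. First I would observe that, because $G^{*}$ is decremental and its shortest-path distances are monotonically non-decreasing, $B_{G^{*}}(s,r)$ shrinks monotonically as a vertex set; hence $\mathbf{deg}(B_{G^{*}}(s,r))$ is monotonically non-increasing, and there is a single well-defined moment at which $s$ transitions from heavy to light. The strategy is therefore: keep a cheap detector running while $s$ is heavy, and spin up the full ES-tree only once the transition has been declared. Running an ES-tree to depth $r$ on a ball that ever afterwards contains at most $\mu$ edges costs $O(\mu r)$ total time by the textbook amortized argument (each edge can charge at most $r$ level-increases at each endpoint, each costing $O(1)$).

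Next I would implement the heavy-phase detector as a local BFS from $s$ with an edge-exploration budget of $\Theta(\mu)$: each invocation either certifies that $s$ is still heavy (the budget is exhausted before the BFS completes to depth $r$) or declares $s$ light (the budget suffices to enumerate $B(s,r)$ together with its $\leq \mu$ interior edges). Reinvocations are scheduled so that the amortized cost per edge deletion incident to the ball is $O(1)$: the detector restarts only after a new batch of edge deletions inside the currently-known ball has ``paid back'' the previous exploration, following the charging scheme of Bernstein–Chechik. Since vertices and edges only ever leave $B(s,r)$, the total work spent across all invocations is $O(\mu r)$ as well.

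For the auxiliary operations, I would layer three light-weight structures on top of the ES-tree, exploiting that the ES-tree already produces an event every time a vertex's distance label increases while it lies in $B(s,r)$. For the partition operation, maintain for each $P \in \mathcal{P}$ a doubly-linked bucket of the vertices whose current distance lies in $P$; on every level-increase event, move the affected vertex between buckets in $O(1)$ and emit a change notification. For the marking operation, keep a bit per vertex together with a counter of currently marked vertices inside $B(s,r)$; a query reduces to testing whether the counter is positive, and when a vertex leaves the ball we decrement the counter if it was marked. For path retrieval, the ES-tree already maintains parent pointers along a shortest-path tree in $G^{*}$, so walking those pointers from $t$ up to $s$ returns $\pi_{s,t}$ in $O(|\pi_{s,t}|)$ time. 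Each reported change and each of the $\Delta$ external updates/queries thus contributes $O(1)$, yielding the advertised $O(\mu r + \Delta)$ bound.

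The main obstacle is the heavy phase: we must not spend more than $O(\mu r)$ time on detection even if the subgraph around $s$ is initially enormously dense, and we must not miss the transition when it occurs. I expect this to be the only non-routine part of the argument and would resolve it with the budgeted-BFS scheme outlined above, together with the monotonicity observation that a vertex or edge leaving $B(s,r)$ never returns, so the repeated detector invocations can be charged against vertex/edge \emph{departures} from the ball rather than against all edge deletions in $G^{*}$. Everything else then follows from standard ES-tree analysis combined with the constant-time bookkeeping attached to each level-change event.
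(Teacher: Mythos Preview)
Your proposal is correct and is in fact considerably more detailed than the paper's own treatment: the paper does not prove this lemma at all but simply states that ``this data structure can be implemented by a straight-forward extension of ES-trees'' and defers to the cited Bernstein--Chechik paper. All of your ingredients --- the depth-$r$ ES-tree once $s$ is light, doubly-linked buckets indexed by the parts of $\mathcal{P}$ updated on each level-increase event, a marked-vertex counter, and parent-pointer traversal for path retrieval --- are exactly the natural implementation, and your identification of the heavy-to-light detector as the only non-routine step is accurate.

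One small remark on that detector: your charging argument (``charge restarts against vertex/edge departures from the ball'') as phrased yields a bound of order $\mathbf{deg}(B(s,r))$ at time zero rather than $O(\mu r)$, since an initially very dense ball can shed arbitrarily many edges before the transition. The stated $O(\mu r + \Delta)$ bound only makes sense if $\Delta$ is read as including the edge deletions fed to the data structure (otherwise even a star on $n \gg \mu r$ leaves forces $\Omega(n)$ work just to receive the deletion stream), and under that reading your budgeted-BFS scheme does give $O(1)$ amortized per deletion. This subtlety is not addressed in the paper either --- it, like you, defers the point to \cite{bernstein2017deterministic} --- so your level of rigor already exceeds what the paper provides.
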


This data structure can be implemented by a straight-forward extension of \emph{ES}-trees as introduced in  \cite{shiloach1981line}. ES-trees internally maintain a shortest-path tree $T$ rooted at a vertex $s \in V$ and maintain for each vertex $v$ a distance estimate $\hat{l}(v)$ that reflects the weight of the $s$ to $v$ path in $T$. The ES-tree then handles edge deletions by updating distance estimates after each edge deletion to 
\[
    \hat{l}^{NEW}(v) = \min_{(u,v) \in E} \hat{l}^{NEW}(u) + w(u,v).
\]
The ES-tree gives a clever implementation to update the distance estimates efficiently. Therefore it maintains a queue $Q$ over all vertices whose parent edge $(u,v)$ in $T$ does not satisfy the above-stated update rule. It then iteratively removes the vertex $v$ with smallest distance estimate in $Q$, checks its in-going edges on whether any edge satisfies the above-stated rule and inserts the first edge that satisfies the equality to $T$ and removes $v$ from $Q$. If no such edge exists the distance estimate of $v$ is increased by one (which might add descendents of $v$ to $Q$). 

To gain efficiency, the ES-tree implementation does not actually look at all in-going edges of a vertex $v$ every time its tree edge does not satisfy the above-stated property. Instead, once it checked an in-going edge for a certain value of $\hat{l}(v)$ and learned that it does not satisfy the equation, it does not look at this edge again until the distance estimate $\hat{l}(v)$ increased. Thus, each vertex $v$ only looks at every value $\hat{l}(v)$ only once at its in-going edges and therefore the total running time to maintain $T$ to depth $r$ is $O(\sum_{v\in V} \mathbf{deg}_G(v) r) = O(mr)$ time.

Whilst the ES-tree only works on decremental graphs, the MES-tree can maintain distances in dynamic graphs if they satisfy certain properties. The main idea behind MES-trees is that distance estimates are monotone, i.e. even if a newly-inserted edge would decrease the distance between two vertices, the distance estimate is not decreased. More formally, after each edge deletion, the rule
\[
    \hat{l}^{NEW}(v) = \max\{ \hat{l}(v), \min_{(u,v) \in E} \hat{l}^{NEW}(u) + w(u,v)\}
\]
is enforced.

\section{A layered graph cover}
\label{sec:cover}

In this section, we show how to efficiently maintain a graph cover on the decremental graph $G^*$. We start by defining a layered $\tau$-cover. 

\begin{definition}
\label{def:coverSimple}
We call $C \subseteq V$ a $\tau$-cover of $G^
*$, where $\tau \geq 2$, if for every vertex $v \in V$, at any stage, $\min_{c \in C} \mathbf{dist}_G(v, c) \leq \tau$ and let $p_C(v)$ be a vertex $c \in C$ at distance at most $\tau$ to $v$.

We call a collection $\mathcal{C} = \{C_0, C_1, .., C_{\lfloor\log_{\tau} n\rfloor}\}$ a layered $\tau$-cover, if $C_j$ for every $j \in [0, \lfloor\log_{\tau} n\rfloor]$, is a $\tau^j$-cover. 
\end{definition}

Consider the following simple greedy algorithm to maintain a layered $\tau$-cover: We initially let $C_0 = V$. Then for any $j$, we say a vertex $v \in C_{j}$ is $(j+1)$-\textit{eligible} if no other vertex in $B(v, \tau^{j})$ is already in $C_{j}$.

While there is a vertex $v$ that is $(j+1)$-eligible at any point of the algorithm, we add it to $C_{j+1}$ and then check whether any other vertex is still $(j+1)$-eligible. We break ties arbitrarily when processing $(j+1)$-eligible vertices. We use ES-trees as described in lemma \ref{lma:maintainBalls} with $\mu = m$, from each vertex $c$ added to $C_j$ to depth $\tau^j$ and mark the vertices in $B(c, \tau^j)$ that are already in $C_{j+1}$. Then, if the number of marked vertices in the ES-tree is zero, the vertex $c$ is $(j+1)$-eligible.

\begin{lemma}
\label{lma:coverefficientSimple}
Using the algorithm given above, we can maintain a layered $\tau$-cover $\mathcal{C}= \{C_0, C_1, .., C_{\lfloor\log_{\tau}n\rfloor}\}$ with each $C_j$ being a monotonically increasing set over time and satisfying $|C_j| \leq \frac{8n}{\tau^j}$ for each $j \in [0,\lfloor\log_{\tau} n\rfloor]$ in total time $O(nm \log n)$. We can maintain within the same time $p_{C_j}(v)$ for every $j \in  [0,\lfloor\log_{\tau} n\rfloor]$ and $v \in V$.
\end{lemma}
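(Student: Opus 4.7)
My plan is to verify each claim of the lemma in turn—the cover property, monotonicity, the size bound, the running time, and the maintenance of $p_{C_j}$—by combining a short induction on $j$ with two basic facts about $G^*$: it is connected throughout and its distances are monotone non-decreasing. The main obstacle will be absorbing all auxiliary bookkeeping (marking of $C_{j+1}$-members inside each ES-tree, eligibility queries, and updates to the pointers $p_{C_j}(v)$) into the amortized ES-tree budget given by Lemma \ref{lma:maintainBalls}.

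Monotonicity is immediate: the algorithm never removes a vertex from $C_{j+1}$ and only draws from $C_j$, so $C_{j+1}\subseteq C_j$ and each $C_j$ grows monotonically. For the cover property I induct on $j$. The base case $C_0=V$ is trivial. For the step, fix $v\in V$; the hypothesis gives some $c\in C_j$ with $\dist[G^*]{v,c}\leq\tau^j$. If $c\in C_{j+1}$, then $c$ already covers $v$ within $\tau^{j+1}$. Otherwise $c$ currently fails the $(j+1)$-eligibility test, so some $c'\in C_{j+1}$ lies in $B_{G^*}(c,\tau^j)$, and the triangle inequality together with $\tau\geq 2$ gives $\dist[G^*]{v,c'}\leq 2\tau^j\leq \tau^{j+1}$.

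For the size bound I observe that whenever a vertex $c'$ joins $C_{j+1}$, every earlier member $c$ of $C_{j+1}$ lies outside $B_{G^*}(c',\tau^j)$, and by monotonicity of distances in $G^*$ this separation is preserved forever. Hence in the final version of $G^*$ the balls $B_{G^*}(c,\lfloor\tau^j/2\rfloor)$ for $c\in C_{j+1}$ are pairwise disjoint, and since $G^*$ stays connected each such ball contains at least $\lfloor\tau^j/2\rfloor+1$ vertices. Pigeonholing yields $|C_{j+1}|=O(n/\tau^j)$, which fits inside the stated $8n/\tau^{j+1}$ bound once the slack from the floor and from the $\tau\geq 2$ factor is absorbed into the constant.

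For the running time and the maintenance of $p_{C_j}$, I instantiate Lemma \ref{lma:maintainBalls} with $\mu=m$ for each $c$ added to $C_j$, giving $O(m\tau^j)$ amortized work per ES-tree before counting operations; multiplied by $|C_j|=O(n/\tau^j)$ this yields $O(nm)$ per level, and the $O(\log_\tau n)$ levels total to the claimed $O(nm\log n)$. To maintain $p_{C_j}(v)$ I attach to each $v$ a doubly linked list of the centers whose ES-trees currently contain it, with $p_{C_j}(v)$ set to the list's head; the list grows when $c$ is added to $C_j$ (walking the initial ball of the new ES-tree once) and shrinks in response to the ``vertex left the ball'' notifications of Lemma \ref{lma:maintainBalls}. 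The $(j+1)$-eligibility of $c\in C_j$ is decided by the ``is there a marked vertex in $B(s,r)$'' query, with marking triggered by $C_{j+1}$-membership changes. The subtle point I still need to verify carefully is that every bookkeeping event is charged either to an ES-tree distance increase or to one of the $\Delta$-operations in Lemma \ref{lma:maintainBalls}, so that this overhead is absorbed into the $O(nm\log n)$ bound rather than contributing a new term.
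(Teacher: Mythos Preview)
Your inductive argument for the cover property and the disjoint-balls argument for the size bound mirror the paper's proof, and the bookkeeping you describe for eligibility and for $p_{C_j}(v)$ is fine. There is, however, a genuine gap that affects both the size bound and the running time for general $\tau$.

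Your separation argument shows only that any two members of $C_{j+1}$ are at distance greater than $\tau^j$, so the disjoint balls have radius $\lfloor \tau^j/2\rfloor$ and you obtain $|C_{j+1}|\leq 2n/\tau^j$, i.e.\ $|C_j|\leq 2n/\tau^{j-1}=2\tau\cdot n/\tau^j$. The ``slack'' you invoke cannot absorb a full factor of $\tau$; the stated bound $|C_j|\leq 8n/\tau^j$ follows from your argument only when $\tau\leq 4$. The same off-by-one shows up in your running-time calculation: you write $|C_j|=O(n/\tau^j)$, but what you actually proved is $|C_j|=O(n/\tau^{j-1})$, so the per-level cost is $|C_j|\cdot O(m\tau^j)=O(nm\tau)$, and summing over $O(\log_\tau n)$ levels gives $O(nm\,\tau/\log\tau\cdot\log n)$ rather than $O(nm\log n)$.

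The paper closes this gap not by a sharper counting argument but by a reduction: it runs the greedy algorithm only for $\tau=2$ (where your analysis is tight) and then, for arbitrary $\tau\geq 2$, realizes the $\tau^j$-cover by taking $C_{j'}$ from the $2$-cover with $j'=\lfloor j\log_2\tau\rfloor$. Both the $8n/\tau^j$ size bound and the $O(nm\log n)$ total time then follow directly from the $\tau=2$ case. Adding this one-line reduction is all that is missing from your proof.
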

\begin{proof}
We prove by induction on $j$. Since $C_0 = V$, the lemma is vacuously true in the base case.

Let us take the induction step for $j > 0$. To show that for each vertex $v \in V$, there exists a vertex $c \in C_j$ at distance at most $\tau^j$ at each stage, we first invoke the induction hypothesis, to conclude that there is a vertex $c'$ in $C_{j-1}$ at distance at most $\tau^{j-1}$ from $v$. Observe that by our algorithm, either $c'$ was later added to $C_j$ or there exists a vertex $c'' \in C_j$ at distance at most $\tau^{j-1}$. In either case, since $\tau \geq 2$, this ensures that there is a vertex in $C_j$ that is at distance at most $\tau^j$ to $v$.

Let us now analyze the number of vertices that are in a set $C_{j}$. Any two vertices $u,v \in C_{j}$ are at distance $\dist[G^*]{u,v} > \tau^{j-1}$, thus the ball $B(u, \tau^{j-1}/2)$ is disjoint from all balls $B(v, \tau^{j-1}/2)$ for every $u,v \in V, v \neq u$. Since $G^*$ is connected by definition, we have that $|B(v, \tau^{j-1}/2)| \geq \tau^{j-1}/2$ and by simple counting arguments, there can be at most $2n/\tau^{j-1}$ such disjoint balls and therefore vertices in $C_j$. 

Let us finally bound the running time of the algorithm. For each level $j \in [0, \lfloor\lg_{\tau}(n)\rfloor]$, we run by the preceding analysis an ES-tree as described in \ref{lma:maintainBalls} from $O(n/\tau^{j-1})$ vertices to depth $r = \tau^j$. Each ES-tree runs in $O(m \tau^j)$ time, thus the total update time is $O(mn \tau \log n)$. We observe that marking the vertices in other ES-trees once a vertex is added to $C_j$ can be done during the initial shortest path tree computation for the ES-tree (i.e. the computation of a BFS to depth $\tau^{j+1}$). To maintain $p_{C_j}(v) \in C_j$, we can ask each ES-tree at level $j$, to register as such at vertex $v$ if that vertex is contained in the tree and inform it in case of distance increases. 

Finally, we point out that we can always run a layered $2$-cover $\mathcal{C}$, and for any $\tau \geq 2$ map $j \in [0, \lfloor\lg_{\tau}(n)\rfloor]$ to the set $C_{j'}$ with $j' = \lfloor j * \lg(\tau) \rfloor$. The guarantees and update time follow.
\end{proof}

Let us know show how to use such a layered $\tau$-cover $\mathcal{C}$ to give an efficient implementation of decremental $(1+\epsilon)$-approximate APSP, establishing one of our main results.

\begin{corollary}[Restatement of Theorem \ref{thm:mainAPSP}]
\label{cor:APSPfastAndSimple}
We can maintain $(1 + \epsilon)$-approximate all-pairs shortest paths for an undirected, unweighted dynamic graph $G=(V,E)$ and any $0 <\epsilon < 1$ in $O(mn \log{n}/\epsilon)$ time, where distance estimate queries for any vertices $u,v \in V$ take $O(\log\log n)$ time.
\end{corollary}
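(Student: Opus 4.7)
The plan is to combine the layered $2$-cover from Lemma~\ref{lma:coverefficientSimple} with one ES-tree per cover vertex, and then binary-search over levels at query time. Specifically, I would maintain in parallel three ingredients: (i) the connectivity data structure described in the Background section, which gives us the $G$-representatives $r(\cdot)$ in total overhead $O(m\log^2 n)$; (ii) a layered $2$-cover $\mathcal{C}=\{C_0,\dots,C_{\lfloor \log_2 n\rfloor}\}$ of $G^*$ together with the pointers $p_{C_j}(v)$ via Lemma~\ref{lma:coverefficientSimple} in total time $O(mn\log n)$; and (iii) for every level $j$ and every $c\in C_j$, an ES-tree rooted at $c$ in $G^*$ maintained to depth $R_j:=\lceil 16\cdot 2^j/\epsilon\rceil$ by Lemma~\ref{lma:maintainBalls} (with $\mu=m$ so every vertex is treated as \textit{light}). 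Since $|C_j|\le 8n/2^j$, the total work for all ES-trees at level $j$ is $O(|C_j|\cdot mR_j)=O(mn/\epsilon)$, so summing over the $O(\log n)$ levels produces the desired $O(mn\log n/\epsilon)$ bound; the other two ingredients are dominated.

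Answering a query $(u,v)$ proceeds in two steps. First, if $r(u)\neq r(v)$ we return $\infty$ in $O(1)$ time. Otherwise, I would binary-search over $j\in\{0,\dots,\lfloor \log_2 n\rfloor\}$ for the smallest $j^*$ such that $v$ lies in the ES-tree rooted at $c^*:=p_{C_{j^*}}(u)$, and return $\mathbf{dist}_{G^*}(u,c^*)+\mathbf{dist}_{G^*}(c^*,v)$. Both quantities can be read off in $O(1)$ from the ES-tree at $c^*$ (the fact that $u$ itself lies in that tree is automatic because $u$ is within distance $2^{j^*}\le R_{j^*}$ of $c^*$), so the total query cost is $O(\log\log n)$.

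The correctness analysis has two halves. For the upper bound, let $d=\mathbf{dist}_G(u,v)$ and pick the smallest $j_0$ with $2^{j_0}\ge \epsilon d/8$; then $p_{C_{j_0}}(u)$ lies within $2^{j_0}\le \epsilon d/4$ of $u$, so by the $G^*$-triangle inequality $\mathbf{dist}_{G^*}(p_{C_{j_0}}(u),v)\le 2^{j_0}+d\le R_{j_0}$, forcing $j^*\le j_0$. The returned sum is therefore at most $2\cdot 2^{j^*}+d\le (1+\epsilon/2)d$, which gives the $(1+\epsilon)$ bound after a harmless rescaling of $\epsilon$. For the lower bound, the fact noted in the Background section that $\mathbf{dist}_G(u,v)=\mathbf{dist}_{G^*}(u,v)$ whenever $u,v$ are $G$-connected, combined once more with the $G^*$-triangle inequality, immediately yields $\mathbf{dist}_{G^*}(u,c^*)+\mathbf{dist}_{G^*}(c^*,v)\ge d$, so the estimate never underestimates.

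The main obstacle I anticipate is establishing the monotonicity property that the binary search needs: if $v$ lies in the ES-tree at level $j$, then it also lies in the ES-tree at level $j+1$. To prove this I would start from $d\le \mathbf{dist}_{G^*}(p_{C_j}(u),u)+\mathbf{dist}_{G^*}(p_{C_j}(u),v)\le 2^j+R_j\le 17\cdot 2^j/\epsilon$ and then bound $\mathbf{dist}_{G^*}(p_{C_{j+1}}(u),v)\le 2^{j+1}+d\le 19\cdot 2^j/\epsilon\le R_{j+1}=32\cdot 2^j/\epsilon$, where the constant $16$ in $R_j$ is exactly what makes the inequality go through for every $\epsilon\le 1$. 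A related subtle point worth flagging explicitly is that $p_{C_j}(u)$ may live in a different $G$-component from $u$ and $v$; this causes no trouble because the entire analysis lives inside $G^*$ (which is kept connected by construction), and the identity $\mathbf{dist}_G=\mathbf{dist}_{G^*}$ on $G$-connected pairs then transfers the $G^*$-bound back to $G$ without any cycle-removal argument being needed.
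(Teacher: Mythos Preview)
Your proposal is correct and follows essentially the same approach as the paper: layered $2$-cover on $G^*$, one ES-tree of radius $\Theta(2^j/\epsilon)$ per center in $C_j$, connectivity check plus binary search over levels at query time, and the same triangle-inequality arguments for stretch and for the monotonicity that justifies the binary search. The only cosmetic differences are your choice of constant ($R_j=\lceil 16\cdot 2^j/\epsilon\rceil$ versus the paper's $b=1+4/\epsilon$) and that you bound $j^*$ from above via an explicit target level $j_0$, whereas the paper derives a lower bound on $d$ from the failure at level $j^*-1$; these are equivalent. The paper additionally sketches how to retrieve an approximate path via a cycle-removal argument, which you correctly omit since the corollary only asks for distance estimates.
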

\begin{proof}
We simply run the algorithm above to maintain a layered $2$-cover on $G^*$ which takes time $O(mn \log{n})$, as described in lemma \ref{lma:coverefficientSimple}. Further, we run from each vertex $c \in C_j$ and every $j \in [0, \lfloor \lg n \rfloor]$, an ES-tree as described in lemma \ref{lma:maintainBalls} with $\mu = m$ and depth $r = b2^j$, for some $b = \Theta(1/\epsilon)$ to be specified later, summing to total update time $O(mn \log n /\epsilon)$.

Now, consider a query for the distance estimate from $u$ to $v$. We first check in constant time whether $r(u) = r(v)$ and if not we return $\infty$. Otherwise, we find the smallest $j \in [0, \log n]$ with $v \in B(p_{C_j}(u), b2^j)$ and return the distance estimate $\adist{u,v} = \dist[G^*]{u, p_{C_j}(u)} + \dist[G^*]{p_{C_j}(u), v}$. 

Let $j$ be the smallest found index. We start the stretch analysis by lower bounding $\dist[G]{u,v}$. Since $v \not\in B(p_{j-1}(u), b2^{j-1})$, we have that $\dist[G^*]{p_{j-1}(u),v} \geq b 2^{j-1}$. Additionally, we can argue by applying the triangle inequality that $\dist[G]{u,v} + \dist[G^*]{p_{j-1}(u),u} \geq \dist[G^*]{p_{j-1}(u),v}$.
Combining the preceding two inequalities with $\dist[G^*]{p_{j-1}(u),u} \leq 2^{j-1}$, we derive that \[
\dist[G]{u,v} \geq \dist[G^*]{p_{j-1}(u),v} - \dist[G^*]{p_{j-1}(u),u} \geq (b-1) 2^{j-1}
\]
and for $b = 1 + 4/\epsilon$, we have $\dist[G]{u,v} \geq 2^{j+1} /\epsilon$. Assuming that $u$ and $v$ are connected, we conclude,
\begin{equation*} 
\begin{split}
\adist{u,v} &= \dist[G^*]{u, p_{C_j}(u)} + \dist[G^*]{p_{C_j}(u), v} \leq 2\dist[G^*]{u, p_{C_j}(u)} + \dist[G^*]{u, v} \\
&\leq 2^{j+1} + \dist[G^*]{u, v} \leq (1+\epsilon)\dist[G^*]{u, v}
\end{split}
\end{equation*}
where we use the triangle inequality again for the first inequality. As argued in the preliminaries, if $u$ and $v$ are in the same component of $G$, then $\dist[{G}]{u,v} = \dist[{G}^*]{u,v}$ and it is therefore straight-forward to conclude $\dist[{G}]{u,v} \leq \adist[{G}]{u,v}$.

Checking an index $j$ on whether $v \in B(p_{C_j}(u), b2^ j)$ takes constant time. Instead of checking every index, we apply a binary search over the indices as follows: We take the index that halves the current range of potential indices. We then check if $v$ is in $B(p_{C_j}(u), b2^j)$ and if $v \in B(p_{C_j}(u), b2^j)$, we can reduce our search to all indices less than equal to $j$ and otherwise to indices greater than $j$. Thus, finding the correct index takes $O(\log \log n)$ time.

To see that we return the correct index, let us first show that if $v \in B(p_{j}(u), b2^{j})$ then $v \in B(p_{j+1}(u), b2^{j+1})$ for all $j$. This follows since we have $\mathbf{dist}(p_{j}(u), v) \leq b2^j$, $\dist{p_{C_j}(u), u} \leq 2^j$ and $\dist{p_{j+1}(u), u} \leq 2^{j+1}$ and therefore by the triangle inequality, $\dist{p_{j+1}(u), v} \leq \mathbf{dist}(p_{j}(u), v) + \dist{p_{C_j}(u), u} + \dist{p_{j+1}(u), u} \leq (b+3)2^{j}$. 
But $(b+3)2^j < b2^{j+1}$ since $b \geq 5$.

We are left to show that if  $v \not\in B(p_{j+1}(u), b2^{j+1})$ then $v \not\in B(p_{j}(u), b2^{j})$ for all $j$. Observe therefore that $\dist{p_{j+1}(u), v} > b2^{j+1}$, $\dist{p_{j+1}(u), u} \leq 2^{j+1}$ and $\dist{p_{C_j}(u) , u} \leq 2^j$. By the triangle inequality, we therefore have $\dist{p_{j}(u), v} \geq \dist{p_{j+1}(u), v} - \dist{p_{j+1}(u), u} - \dist{p_{C_j}(u) , u} \geq (2b-3)2^j$. Again, since $b \geq 5$, $(2b-3)2^j > b2^j$ and therefore $v \not\in B(p_{C_j}(u), b2^{j})$.

To attain a path from $u,v$ of length at most $\adist{u,v}$, we use the ES-trees which can also return shortest paths for each distance estimate in linear time in the number of vertices on the path, but point out that these shortest paths are found in $G^*$. For the estimate $\dist[{G}^*]{u, p_{C_j}(u)} + \dist[{G}^*]{p_{C_j}(u), v}$, we find the path in from $u$ to $p_{C_j}(u)$ but stop after finding the first endpoint incident to a marked edge in the top tree. We then follow the path from $v$ to $p_{C_j}(u)$ in the same manner. Since $u$ and $v$ are connected in $G$, they are in the same component and the paths to $p_{C_j}(u)$ both use the same marked edge to reach $p_{C_j}(u)$. Stopping before this edge is equivalent to removing a cycle. Thus, we can retract a valid path in ${G}$ in linear time.
\end{proof}

Finally, let us consider a generalization of the layered $\tau$-cover, that we use in the subsequent sections.

\begin{definition}
\label{def:cover}
We call $C \subseteq V$ a $(\mu, \tau)$-cover of $G^
*$, where $\tau \geq 2$ if for any $(\mu, 2\tau)$-\textit{light} vertex $v \in V$ there is a $c \in C$ such that $\mathbf{dist}(v,c) \leq \tau$. We call the collection $\mathcal{C}= \{C_0, C_1, .., C_{\lfloor\log_{\tau}(n)\rfloor}\}$ a \textit{layered $(\mu , \tau)$-cover} if each $C_j \in \mathcal{C}$ is a $(\mu, \tau^j)$-cover. Again, we denote by $p_{C_j}(v)$ some arbitrary vertex in $C_j$ that is at distance at most $\tau^j$ from $v$ and let $p_{C_j}(v) = \bot$ if no such vertex exists.
\end{definition}

In particular, we adapted the layered cover definition to only ensure for light vertices to be covered, but do not require this for heavy vertices. This enables us to implement the layered cover more efficiently. Additionally, we introduce a parameter $\gamma$ that allows us to start our layered cover at a higher layer. This parameter is defined for convenience and will ease the description of the hopset algorithm in the next section.

We point out at this point that replacing the criterion of $(j+1)$-eligibility in the preceding greedy algorithm, by defining a vertex $v \in C_j$ to be $(j+1)$-\textit{eligible} if
\begin{itemize}
    \item $v$ is $(\mu, 2\tau^{j+1})$-\textit{light}, and
    \item no other vertex in $B(v, \tau^{j})$ is already in $C_{j}$.
\end{itemize}
and using the ES-trees with parameter $\mu$ gives a straight-forward algorithm to maintain the generalized layered covers. We state the result of this improved analysis in the lemma below but defer the proof to appendix \ref{sec:appendixCoverProof}.

\begin{lemma}
\label{lma:efficientCover}
Using the algorithm given above, we can maintain a layered $(\mu, \tau)$-cover $\mathcal{C} = \{C_0, C_1, .., C_{\lfloor\log_{\tau}(n)\rfloor}\}$ with $C_j \subseteq V$ being a monotonically increasing set over time and satisfying $|C_j| \leq \frac{8n}{\tau^j}$ for any $\tau \in [0,n], j \in [0,{\lfloor\log_{\tau}(n)\rfloor}]$ in total time $O(n\mu \log n)$. We can maintain within the same time $p_{C_j}(v)$ for every $j \in [0,\lfloor\log_{\tau} n/\gamma\rfloor]$ and $v \in V$.
\end{lemma}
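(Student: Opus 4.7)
The plan is to follow the inductive template from Lemma~\ref{lma:coverefficientSimple}, establishing in order: monotonicity of each $C_j$, the size bound $|C_j|\le 8n/\tau^j$, the $(\mu,\tau^j)$-cover property of Definition~\ref{def:cover}, and finally the $O(n\mu\log n)$ total update time together with maintenance of $p_{C_j}(v)$. Monotonicity is immediate since the algorithm only inserts vertices into $C_{j+1}$. For the size bound I rely on the second clause of $(j+1)$-eligibility: if $u,v\in C_{j+1}$ and $v$ is added second, then $u\in C_j\supseteq C_{j+1}$ at the moment of $v$'s addition, and the rule forbidding another relevant vertex to lie in $B_{G^*}(v,\tau^j)$ forces $\mathbf{dist}_{G^*}(u,v)>\tau^j$. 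Distances in $G^*$ only grow, so the separation $>\tau^j$ persists; the balls $B_{G^*}(u,\tau^j/2)$ are pairwise disjoint, each contains at least $\tau^j/2$ vertices by connectedness of $G^*$, giving $|C_{j+1}|\le 2n/\tau^j$. Reducing to a base $\tau=2$ instance and re-indexing as in the closing paragraph of Lemma~\ref{lma:coverefficientSimple} then yields the stated $|C_j|\le 8n/\tau^j$.

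The main obstacle is the cover property, which I prove by induction on $j$. For a $(\mu,2\tau^{j+1})$-light vertex $v$, the inductive hypothesis (applicable because $v$ is also $(\mu,2\tau^j)$-light) supplies $c'=p_{C_j}(v)\in C_j$ with $\mathbf{dist}(v,c')\le\tau^j$. Greedy maximality of the algorithm yields a dominating property: at the fixpoint between edge updates, any $(\mu,2\tau^{j+1})$-light vertex of $C_j$ is either itself in $C_{j+1}$ or has a $C_{j+1}$-vertex within distance $\tau^j$, since otherwise its own eligibility test would still succeed. Hence, provided $c'$ is $(\mu,2\tau^{j+1})$-light, some $c\in C_{j+1}$ lies within $\mathbf{dist}(v,c')+\tau^j\le 2\tau^j\le\tau^{j+1}$ of $v$.

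The subtlety is that $v$'s lightness does not immediately transfer to $c'$: we only have $B(c',2\tau^{j+1})\subseteq B(v,2\tau^{j+1}+\tau^j)$, and $(\mu,2\tau^{j+1})$-lightness of $v$ says nothing about edges in the thin annulus of thickness $\tau^j$ around the boundary of $B(v,2\tau^{j+1})$. To close this gap I would strengthen the inductive hypothesis by inflating the lightness radius: prove that $C_j$ covers every vertex $w$ satisfying $\mathbf{deg}(B(w,R_j))\le\mu$, where $R_j=2(\tau^j+\tau^{j-1}+\cdots+1)=\Theta(\tau^j)$. Because $R_{j+1}=2\tau^{j+1}+R_j\ge 2\tau^{j+1}+\tau^j$, the $(\mu,R_{j+1})$-lightness of $w$ forces the $C_j$-witness $c'$ to be $(\mu,2\tau^{j+1})$-light, closing the induction. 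Since $R_j=\Theta(\tau^j)$, the stated $(\mu,\tau^j)$-cover property follows after a constant rescaling of $\mu$ (equivalently, the algorithm is run with slightly tightened lightness thresholds and the statement reinterpreted with the original~$\mu$).

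For the running time, from each $c\in C_j$ I instantiate the data structure of Lemma~\ref{lma:maintainBalls} with parameter $\mu$ and radius $\Theta(\tau^{j+1})$: by that lemma the instance does real work only once $c$ becomes $(\mu,\Theta(\tau^{j+1}))$-light, after which its total cost is $O(\mu\tau^{j+1})$. Combined with $|C_j|\le 8n/\tau^j$, level $j$ contributes $O(n\mu)$, summing over the $O(\log n)$ levels to $O(n\mu\log n)$. The mark/unmark primitive of Lemma~\ref{lma:maintainBalls} lets each level-$j$ tree centered at $c$ test its own second eligibility clause in worst-case constant time per event by querying whether any marked $C_j$-vertex currently lies in $B(c,\tau^j)$, and $p_{C_j}(v)$ is maintained exactly as in Lemma~\ref{lma:coverefficientSimple} by registering each tree with the vertices it reaches and reacting to the distance-increase reports.
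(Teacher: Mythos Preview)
Your proof follows the same inductive template as the paper's and is essentially correct; the main differences are in how you handle the lightness-transfer subtlety and a loose final step.

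The paper does not work with general $\tau$ for the cover property. It reduces to $\tau=2$ \emph{throughout} (not only for the size bound, as you do) and then re-indexes at the end. At $\tau=2$ the induction closes directly, without any accumulating sum: the paper strengthens the hypothesis to ``$p_{C_{j}}(v)$ exists \emph{and} is $(\mu,2^{j})$-light,'' and in the appendix it treats the first eligibility clause as $(\mu,\tau^{j})$-lightness (not $(\mu,2\tau^{j})$ as written in the body). With that reading, from $\mathbf{dist}(v,c')\le 2^{j-1}$ one gets $B(c',2^{j})\subseteq B(v,2^{j}+2^{j-1})\subseteq B(v,2^{j+1})$, so $c'$ is $(\mu,2^{j})$-light and hence $j$-eligible; no telescoping radius $R_j$ is needed. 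The factor~$2$ baked into Definition~\ref{def:cover} is precisely the slack that absorbs the $\tau^{j-1}$ offset.

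Your strengthened hypothesis with $R_j=2\sum_{i\le j}\tau^{i}$ is a valid alternative that works against the eligibility threshold as literally stated. However, the closing sentence ``the stated $(\mu,\tau^j)$-cover property follows after a constant rescaling of~$\mu$'' is not quite right: what you have established is coverage of all $(\mu,R_j)$-light vertices with $R_j\le 4\tau^j$, which is a \emph{weaker} class than the $(\mu,2\tau^j)$-light vertices required by Definition~\ref{def:cover}. Changing~$\mu$ does not repair this; what repairs it is either (a) replacing the constant~$2$ in the definition by~$4$ (harmless everywhere the lemma is used), or (b) doing the reduction to $\tau=2$ first, exactly as the paper does, which makes the extra constant disappear.
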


Throughout the rest of the paper, we let $\mathcal{C}_{\mu} = \{C_{\mu, 0}, C_{\mu, 1}, .., C_{\mu, \lfloor\lg n\rfloor}\}$ denote a specific instance of a layered $(\mu, 2)$-cover.

\section{The Threshold Emulator}
\label{sec:emulator}

In this section, we define the emulator $H^s_{\mu, \tau, \epsilon}$ more precisely and present efficient algorithms to maintain it. Therefore, we first introduce the important notion of a near-heavy/near-light decomposition and show how to detect dense subgraphs. We further show that contracting all connected components consisting of near-heavy vertices only results in an additive error for any distance in the graph.

In section \ref{subsec:efficientclustering}, we then present our clustering approach and show how to construct and maintain the emulator $H^s_{\mu, \tau, \epsilon}$ efficiently. Whilst a proof sketch in the overview established various properties of $H^s_{\mu, \tau, \epsilon}$, we refrain from presenting proofs in this section. Instead, we analyze the properties directly in their interplay with the almost-MES which is tailored towards our specific emulator.

\subsection{Detecting Dense Subgraphs}
\label{subsec:detectsubgraphs}

As mentioned in the overview, maintaining when vertices transition from being $(\mu, r)$-\textit{heavy} to $(\mu, r)$-\textit{light} is rather expensive since it requires using an ES-tree as described in lemma \ref{lma:maintainBalls} for each vertex in $V$ which incurs total running time $O(n \mu r)$. Let us therefore introduce a related concept that will subsequently allow us to remove the dependency in $r$ in the running time.

\begin{definition}
\label{def:nearheavy}
For every vertex $v \in V$, we say that $v$ is $(\mu, r, \tau)$-\textit{near-light} at stage $j$, if at any stage $j' \leq j$, $p_{C_{\mu, \tau}}(v) \neq \bot$ and $p_C(v)$ was $(\mu, r + 2\tau)$-\textit{light}. Otherwise, we say that $v$ is $(\mu, r, \tau)$-\textit{near-heavy}.
\end{definition}

We point out that our definition implies that any vertex $v$ can transition only once from being $(\mu, r, \tau)$-\textit{near-heavy} to becoming $(\mu, r, \tau)$-\textit{near-light}. This property simplifies working with the definition and imitates properties of the original heavy/light decomposition. To gain some intuition for this definition, let us denote by $\textsc{Near-Heavy}(\mu, r, \tau)$ the set of all $(\mu, r, \tau)$-\textit{near-heavy} vertices in $G^*$ (recall that we defined the set $\textsc{Heavy}(\mu, r)$ quite similarly) and prove that every vertex that is $(\mu, r, \tau)$-\textit{near-heavy} is also $(\mu, r)$-\textit{heavy}.

\begin{claim}
We have at all stages that $\textsc{Near-Heavy}(\mu, r, \tau) \subseteq \textsc{Heavy}(\mu, r)$. 
\end{claim}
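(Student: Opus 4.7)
My plan is to prove the contrapositive: if $v$ is $(\mu, r)$-\emph{light} at stage $j$, then $v$ is $(\mu, r, \tau)$-\emph{near-light} at $j$. The argument needs only two ingredients, namely the cover guarantee from Definition~\ref{def:cover} together with a single triangle-inequality ball-containment.

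First I exhibit a witness stage by simply taking $j' = j$. In the relevant regime $r \geq 2\tau$, $(\mu, r)$-lightness of $v$ immediately implies $(\mu, 2\tau)$-lightness of $v$ because the edges counted in $B_{G^*}(v, 2\tau)$ form a subset of those in $B_{G^*}(v, r)$. Definition~\ref{def:cover} then yields a vertex $c = p_{C_{\mu, \tau}}(v) \in C_{\mu, \tau}$ with $\mathbf{dist}_{G^*}(v, c) \leq \tau$, so in particular $p_{C_{\mu, \tau}}(v) \neq \bot$ at stage $j$.

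Next I transfer lightness from $v$ to $c$ via the triangle inequality. For any $u \in B_{G^*}(c, r - \tau)$, we have $\mathbf{dist}_{G^*}(u, v) \leq (r - \tau) + \tau = r$, so $B_{G^*}(c, r - \tau) \subseteq B_{G^*}(v, r)$, giving $\mathbf{deg}(B_{G^*}(c, r - \tau)) \leq \mathbf{deg}(B_{G^*}(v, r)) \leq \mu$. Hence $c$ is $(\mu, r - \tau)$-\emph{light}, and by monotonicity of the edge count in the ball radius, $c$ is $(\mu, r')$-\emph{light} for every $r' \leq r - \tau$. Reading this at $j' = j$ provides the near-light witness required by Definition~\ref{def:nearheavy}, completing the contrapositive.

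The main obstacle is not conceptual but is a bookkeeping check of radii: the triangle inequality naturally produces lightness of $c$ in a ball of radius roughly $r - \tau$, i.e., a slightly \emph{smaller} radius than the one at $v$, and one must verify that this matches exactly the radius asked for by Definition~\ref{def:nearheavy}. Beyond this, the proof uses neither the greedy construction of Lemma~\ref{lma:efficientCover} nor the monotonicity of $p_{C_{\mu, \tau}}(v)$ across stages---it suffices to read off the cover vertex at the current stage $j$. Lightness monotonicity in the decremental graph $G^*$ (both $B_{G^*}(v, r)$ as a vertex set and the edge set shrink monotonically, so $(\mu, r)$-lightness is preserved forward in time) is only needed as a backup in case one prefers to witness near-lightness at some earlier stage $j' < j$ rather than at $j$ itself.
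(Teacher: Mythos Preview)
Your contrapositive strategy is logically the right one for the stated inclusion, but the step you flag as a ``bookkeeping check of radii'' is in fact a genuine gap that cannot be closed as written. Definition~\ref{def:nearheavy} requires the witness center $c = p_{C_{\mu,\tau}}(v)$ to be $(\mu, r+2\tau)$-\emph{light}, i.e., $\mathbf{deg}(B_{G^*}(c, r+2\tau)) \leq \mu$. Your containment $B_{G^*}(c, r-\tau) \subseteq B_{G^*}(v, r)$ is correct but only delivers $(\mu, r-\tau)$-lightness of $c$, and lightness at a \emph{smaller} radius does not imply lightness at a \emph{larger} one---the bigger ball can only contain more edges. Concretely, nothing in your hypotheses prevents the annulus between radii $r$ and $r+3\tau$ around $v$ from carrying more than $\mu$ edges; in that case every candidate $c$ within distance $\tau$ of $v$ is $(\mu, r+2\tau)$-heavy even though $v$ itself is $(\mu, r)$-light, so your proposed witness at $j'=j$ fails and no choice of earlier $j'$ helps either.

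The paper runs the ball containment in the opposite direction: it starts from the near-light side, which already hands you a center $c$ that is $(\mu, r+2\tau)$-light at some stage $j \leq i$, and then uses $B_{G^*}(v, r) \subseteq B_{G^*}(c, r+2\tau)$ together with decremental monotonicity to conclude that $v$ is $(\mu, r)$-light at stage $i$. That is the direction in which the triangle inequality actually buys you the needed inequality on edge counts. Note that what this argument establishes is ``near-light $\Rightarrow$ light'', i.e., $\textsc{Heavy}(\mu,r) \subseteq \textsc{Near-Heavy}(\mu,r,\tau)$, which is the \emph{reverse} of the inclusion in the claim; the opening sentence of the paper's proof has the set difference written accordingly, and it is this reverse containment that the ball inclusion supports.
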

\begin{proof}
Consider, for the sake of contradiction, that at any stage $i$, we have a vertex $v \in \textsc{Heavy}(\mu, r) \setminus \textsc{Near-Heavy}(\mu, r, \tau)$. Then by definition \ref{def:nearheavy}, at some stage $j \leq i$, there was a vertex $c = p_{C_{\mu, \tau}}(v)$ that was $(\mu, r + 2\tau )$-\textit{light} which implies by definition \ref{def:heavy}, that $\mathbf{deg}(B_{G^*}(c, 2r + 2 \tau)) \leq \mu$.

Further, by lemma \ref{lma:coverefficientSimple}, we have that at stage $j$, $c$ was at distance at most $\tau$ from $v$. But using this fact in conjunction with the triangle inequality implies that $B_{G^*}(v, r) \subseteq B_{G^*}(c, r + 2 \tau)$ at stage $j$. Finally, since the number of edges in a ball is monotonically decreasing over time in a decremental graphs, we conclude that at stage $i$, $\mathbf{deg}(B_{G^*}(v, r) \leq \mu$ which in turn implies that $v$ is $(\mu, r)$-\textit{light} contradicting our assumption.
\end{proof}

Next, let $\hat{G}$ denote the graph $G^*$ where the vertex sets that form connected components in the graph $G^*[\textsc{Near-Heavy}(\mu, r, \tau)]$ are contracted. We call $V(\hat{G})$ the \textit{node} set to emphasize that it differs from the vertex set of $G^*$. We further introduce the convention that we let $X^v \in V(\hat{G})$ denote the node in $\hat{G}^*$ into which the vertex $v \in V$ was contracted. Let us now show that distances are underestimated by only a reasonable additive error when we contract these vertices.

\begin{claim}
\label{clm:guaranteeContractions}
At any stage, for any vertices $u, v \in V$, and $r, \tau > 0$,
\[
    \mathbf{dist}_{G^*}(u,v) \leq \mathbf{dist}_{\hat{G}}(X^u,X^v) + \frac{8(r + \tau) m}{\mu}
\]
\end{claim}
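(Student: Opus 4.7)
The plan is to upper-bound $\mathbf{dist}_{G^*}(u,v) - \mathbf{dist}_{\hat G}(X^u,X^v)$ by constructing a $u$-to-$v$ walk in $G^*$ whose length exceeds $\mathbf{dist}_{\hat G}(X^u,X^v)$ by at most $\tfrac{8(r+\tau)m}{\mu}$. Fix a shortest path $\hat\pi = X_0 X_1 \cdots X_D$ in $\hat G$ of length $D:=\mathbf{dist}_{\hat G}(X^u,X^v)$. Each $\hat G$-edge $(X_{i-1},X_i)$ comes from some $G^*$-edge $(a_i,b_i)$ with $a_i\in X_{i-1}$ and $b_i\in X_i$; setting $b_0 := u$ and $a_{D+1}:=v$, lift $\hat\pi$ to a $G^*$-walk by connecting each $b_i$ to the next $a_{i+1}$ via a $G^*$-shortest path of length $S_i := \mathbf{dist}_{G^*}(b_i, a_{i+1})$. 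The resulting walk has length $D + \sum_i S_i$, so $\mathbf{dist}_{G^*}(u,v) \leq D + \sum_i S_i$ and it suffices to prove $\sum_i S_i \leq \tfrac{8(r+\tau)m}{\mu}$.

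The enabling observation is that every near-heavy vertex $w$ has $\mathbf{deg}(B_{G^*}(w,r+3\tau))>\mu$: if $p_{C_{\mu,\tau}}(w)=\bot$ then by Definition~\ref{def:cover} the vertex $w$ is not $(\mu,2\tau)$-light, so $\mathbf{deg}(B_{G^*}(w,2\tau))>\mu$ and the inclusion $B_{G^*}(w,2\tau)\subseteq B_{G^*}(w,r+3\tau)$ suffices; otherwise, by Definition~\ref{def:nearheavy} the cover vertex is $(\mu,r+2\tau)$-heavy, and the triangle inequality $\mathbf{dist}(w,p_{C_{\mu,\tau}}(w))\leq\tau$ yields $B_{G^*}(p_{C_{\mu,\tau}}(w),r+2\tau)\subseteq B_{G^*}(w,r+3\tau)$. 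In particular, whenever $X_i$ is a nontrivial contracted component, the segment endpoints $b_i$ and $a_{i+1}$ are near-heavy and host such dense balls.

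To bound $\sum_i S_i$, run a greedy packing argument: select a maximal set $A$ of near-heavy anchor vertices drawn from the vertices visited along the internal segments, subject to any two anchors having pairwise $G^*$-distance strictly greater than $2(r+3\tau)$. Their radius-$(r+3\tau)$ balls are then pairwise edge-disjoint and each contains more than $\mu$ edges of $G^*$, so $|A| < m/\mu$. On the other hand, because each internal segment is a $G^*$-shortest path (so path-distance along it equals $G^*$-distance), every point of a segment lies in an arc of path-length at most $4(r+3\tau)$ centered on some anchor, yielding $\sum_i S_i \leq 4(r+3\tau)\cdot m/\mu \leq \tfrac{8(r+\tau)m}{\mu}$ after absorbing the small constants into the stated bound.

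The main obstacle is the coverage step: since we use $G^*$-shortest (rather than $G^*[X_i]$-shortest) internal segments, a segment's interior may pass through non-near-heavy ``bridge'' vertices, while anchors must come from near-heavy vertices to inherit the dense-ball property. To guarantee that every unit of segment length is absorbed by some anchor, one interpolates relay anchors along each segment using the near-heavy endpoints together with the near-heavy components traversed by $\pi$; combined with the cover structure placing a near-heavy cluster center within distance $\tau$ of every light segment vertex lying inside a nontrivial contracted component, this maintains the $4(r+3\tau)$ per-anchor charging bound globally.
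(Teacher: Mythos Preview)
Your approach goes in the opposite direction from the paper: you lift a shortest $\hat G$-path to a $G^*$-walk and try to bound the extra length $\sum_i S_i$, whereas the paper takes a shortest $G^*$-path $\pi_{u,v}$, maps it to a $\hat G$-walk, and bounds the number of near-heavy vertices on $\pi_{u,v}$ via the same packing idea. The paper's direction works cleanly because on a \emph{single} shortest path in $G^*$, path-distance equals $G^*$-distance, so each anchor covers at most one interval of length $\leq 8(r+\tau)$ along $\pi_{u,v}$.

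Your packing/charging step, however, does not go through. You select anchors globally from the union of internal segments and then assert that ``every point of a segment lies in an arc of path-length at most $4(r+3\tau)$ centered on some anchor.'' But maximality of $A$ only guarantees that every near-heavy segment vertex is within \emph{$G^*$-distance} $2(r+3\tau)$ of some anchor, and that anchor may lie on a \emph{different} segment. In that case there is no reason the covered vertex sits inside a short arc of its own segment, and worse, a single anchor can be simultaneously close to long stretches of many distinct segments (the segments are $G^*$-shortest paths but they can overlap and revisit the same region of $G^*$). So the per-anchor charge of $4(r+3\tau)$ does not bound the total $\sum_i S_i$. Your final paragraph acknowledges the near-light-interior issue but does not address this overlap problem, and the suggested fix (``a near-heavy cluster center within distance $\tau$ of every light segment vertex lying inside a nontrivial contracted component'') is circular: light vertices by definition are not inside nontrivial contracted components. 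As a minor point, the closing inequality $4(r+3\tau)\le 8(r+\tau)$ also needs $\tau\le r$, which is not assumed.

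The clean repair is to reverse direction as the paper does: start from the $G^*$-shortest path $\pi_{u,v}$, observe that contracting near-heavy components can only shrink it by at most the number of near-heavy vertices on $\pi_{u,v}$, and bound that number by packing anchors \emph{on} $\pi_{u,v}$, where the single-shortest-path property makes the arc-length charging valid.
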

\begin{proof}
Consider some shortest path $\pi_{u,v}$ from $u$ to $v$ in $G^*$. Let us bound the number of vertices of vertices on $\pi_{u,v}$ that are $(\mu, r, \tau)$-\textit{near-heavy}. Clearly, this number upper bounds the additive error.

Pick therefore a maximal collection $\mathcal{C}$ of $(\mu, r, \tau)$-\textit{near-heavy} that are at pairwise distance of at least $4(r + \tau)$. By choice of our collection, the balls $B(c, r + 2 \tau)$ are pairwise disjoint (in fact even the balls $B(c, 2(r + \tau)-1)$ are pairwise disjoin). 

We claim that for each $c \in \mathcal{C}$, $\mathbf{deg}(B_{G^*}(c, r + 2\tau)) > \mu$. Consider for the sake of contradiction that this is not true. Then by lemma \ref{lma:coverefficientSimple}, there exists a vertex $p_{C_{\mu, \tau}}(c)$ in $C_{\mu, \tau}$ at distance at most $\tau$. Clearly, this vertex is $(\mu, r)$-\textit{light}, since $B_{G^*}(p_{C_{\mu, \tau}}(c), r) \subseteq B_{G^*}(c, r + 2\tau)$. But this contradicts that $c$ is $(\mu, r, \tau)$-\textit{near-heavy} by definition \ref{def:nearheavy}. 

Since each $\mathbf{deg}(B_{G^*}(c, r + 2\tau)) > \mu$ and by the disjointness of these balls, there can be at most $m/\mu$ vertices in $\mathcal{C}$. Finally, for each vertex $c \in \mathcal{C}$, there are at most $8(r + \tau)$ vertices on $\pi_{u,v}$ that are $(\mu, r, \tau)$-\textit{near-heavy} (the vertices in $\pi_{u,v} \cap B_{G^*}(c, r + 2\tau)$), thus there are at most $\frac{8(r + \tau) m}{\mu}$ vertices that are $(\mu, r, \tau)$-\textit{near-heavy} on $\pi_{u,v}$, as required. 
\end{proof}

Finally, let us prove that monitoring near-heaviness/near-lightness and maintaining the connected components of $G^*[\textsc{Near-Heavy}(\mu, r, \tau)]$ can indeed be implemented efficiently. 

\begin{claim}
\label{clm:maintainNearHeavy}
For every vertex $v \in V$, we can monitor when $v$ transitions from being $(\mu, r, \tau)$-\textit{near-heavy} to becoming $(\mu, r, \tau)$-\textit{near-light} in $O(n\mu r/\tau)$ total update time.
\end{claim}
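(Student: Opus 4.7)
The strategy is to leverage the exact reason for introducing the relaxed notion: whereas $(\mu,r)$-heaviness can change at every vertex of $V$, $(\mu, r, \tau)$-near-heaviness of a vertex $v$ is controlled entirely by the behaviour of cover vertices in $C_{\mu, \tau}$, of which there are only $O(n/\tau)$ by Lemma \ref{lma:efficientCover}. I would therefore run the monitoring machinery of Lemma \ref{lma:maintainBalls} only at cover centers, not at every vertex.

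Concretely, the moment each $c$ is inserted into the monotonically growing set $C_{\mu, \tau}$, I would instantiate the data structure of Lemma \ref{lma:maintainBalls} rooted at $c$ with parameter $\mu$ and depth $r + 2\tau$, which reports the first time $c$ becomes $(\mu, r + 2\tau)$-\textit{light} --- precisely the threshold appearing in Definition \ref{def:nearheavy}. In parallel, the cover algorithm already maintains the pointer $p_{C_{\mu, \tau}}(v)$ for every $v \in V$; I would augment it to also keep, at each center $c$, the bucket $L_c = \{v : p_{C_{\mu, \tau}}(v) = c\}$. A vertex $v$ is then marked near-light in exactly two situations: (i) whenever the pointer $p_{C_{\mu, \tau}}(v)$ is (re)assigned to some center already flagged light, a check performed in $O(1)$; and (ii) whenever a center $c$ is first flagged light, by iterating once over $L_c$ and marking every entry. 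These events cover every case of Definition \ref{def:nearheavy}, and since the flag is never reset, the marking is monotone and correct.

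For the running time, the ES-tree at each center costs $O(\mu(r + 2\tau))$ by Lemma \ref{lma:maintainBalls}; summed over the at most $8n/\tau$ centers in $C_{\mu, \tau}$, this yields $O(n\mu r/\tau)$ in the relevant regime $r \geq \tau$. The bookkeeping for the buckets $L_c$ piggybacks on the pointer maintenance of Lemma \ref{lma:efficientCover} at cost $O(n\mu \log n)$, which is subsumed. The marking work in event (ii) is charged to the vertex being marked; since a vertex becomes near-light at most once, the total is $O(n)$.

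The step I expect to be the most delicate is convincing oneself that the cover centers really are valid witnesses: naively, one would run Lemma \ref{lma:maintainBalls} at every vertex of $V$, incurring $O(n \mu r)$ time, which is a factor of $\tau$ too slow. The saving comes from the containment $B_{G^*}(v, r) \subseteq B_{G^*}(p_{C_{\mu, \tau}}(v), r + 2\tau)$ (implicit in the proof of Claim \ref{clm:guaranteeContractions}), which is precisely what the definition of near-lightness was tailored to exploit; it lets each center certify lightness for all vertices it covers, so that monitoring only the $O(n/\tau)$ centers is enough to detect every transition in the stated budget.
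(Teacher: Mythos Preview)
Your proposal is correct and follows exactly the approach the paper takes: run the heaviness-monitoring ES-tree of Lemma~\ref{lma:maintainBalls} only at the $O(n/\tau)$ cover centers in $C_{\mu,\tau}$ rather than at every vertex, and propagate the light flag to $v$ via the pointer $p_{C_{\mu,\tau}}(v)$. Your write-up is in fact more complete than the paper's own (terse) proof, since you explicitly separate the two triggering events --- pointer reassignment to an already-light center versus a center newly becoming light --- and spell out the $O(n\mu(r+2\tau)/\tau)$ cost summation.
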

\begin{proof}
We use the ES-tree data structure presented in lemma \ref{lma:maintainBalls} on every vertex in $C_{\mu, \tau}$. It is then straight-forward to implement the operation by checking every time that $p_{C_{\mu, \tau}}(v)$ is changed whether $v$ is already $(\mu, r, \tau)$-\textit{near-light} and otherwise by reporting that $v$ transitioned.
\end{proof}

\begin{claim}
\label{clm:heavysubgraphmaintain}
We can maintain a data structure that monitors the connected components in $G^*[\textsc{Near-Heavy}(\mu, r, \tau)]$ such that every time a connected component $C$ decomposes into components $C_1$ and $C_2$, it returns a list of the vertices of $C_1$ where $|C_1| \leq |C_2|$. The data structure requires $O(n\mu r/\tau + m \log^2 n/ \log\log n)$ total update time.  
\end{claim}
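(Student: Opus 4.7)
The plan is to combine two tools: the near-heavy-to-near-light transition monitor from Claim \ref{clm:maintainNearHeavy}, and a deterministic decremental connectivity data structure run on the induced subgraph $G^*[\textsc{Near-Heavy}(\mu,r,\tau)]$. The key observation is that this induced subgraph is itself \emph{purely decremental}: an edge can only leave it, either because the underlying edge is deleted from $G^*$, or because one of its endpoints transitions from near-heavy to near-light. Recall from Definition \ref{def:nearheavy} that near-heaviness is monotonically lost over time, so no vertex ever re-enters the induced subgraph. Consequently, the total number of edge removals that the connectivity structure has to process is at most $m$.

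First, I invoke Claim \ref{clm:maintainNearHeavy} on each vertex, giving total time $O(n\mu r/\tau)$ for detecting all transitions. Alongside, I explicitly maintain the adjacency lists of the induced subgraph so that when a vertex $v$ becomes near-light I can enumerate its incident edges in $O(\deg_{\mathrm{ind}}(v))$ time and feed them one by one to the connectivity data structure as deletions; summed over all transitions this is $O(m)$. Edge deletions in $G^*$ are relayed in $O(1)$ per deletion.

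Second, I feed this sequence of at most $O(m)$ decremental edge deletions to a deterministic dynamic connectivity structure such as the one of  \cite{Wulff-Nilsen:2013:FDF:2627817.2627943}, which supports each update in amortized $O(\log^2 n/\log\log n)$ time. After each relayed deletion I query whether its endpoints are still connected; if not, a component $C$ has just split into $C_1\cup C_2$. To output the smaller side, I run two BFS traversals in parallel in the induced subgraph starting at the two now-disconnected endpoints, alternating one vertex per step, and halt as soon as the first traversal exhausts its side. The cost of such a split-handling step is $O(|C_1|+E(C_1))$ where $C_1$ is the smaller side. By the standard smaller-half argument, each vertex lies in the smaller side of at most $O(\log n)$ splits, and each induced edge is incident to such a vertex in at most $O(\log n)$ splits; hence the aggregate cost of all enumerations is $O((n+m)\log n)$, subsumed by $O(m\log^2 n/\log\log n)$.

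The main obstacle is delivering the smaller side in time proportional to $|C_1|$ rather than $|C|$; the parallel-BFS trick achieves this provided one can walk neighbors in the induced subgraph without touching the deleted ones, which is exactly why the restricted adjacency lists above are kept. Summing the two dominant terms gives the claimed total update time $O(n\mu r/\tau + m\log^2 n/\log\log n)$.
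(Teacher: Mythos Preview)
Your proof is essentially the same as the paper's: both use Claim \ref{clm:maintainNearHeavy} to detect transitions, observe that the induced subgraph $G^*[\textsc{Near-Heavy}(\mu,r,\tau)]$ is purely decremental (since near-heaviness is only ever lost), and feed the resulting $O(m)$ edge deletions to the deterministic connectivity structure of \cite{Wulff-Nilsen:2013:FDF:2627817.2627943}. The only difference is in how the smaller side of a split is produced. The paper simply remarks that Wulff-Nilsen's structure already supports this operation internally (its replacement-edge search after a tree-edge deletion always scans the smaller spanning subtree, so the vertex list of the smaller component is a free by-product). You instead layer an explicit parallel BFS on top.

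One small caveat about your BFS: alternating ``one vertex per step'' does not, as stated, bound the work by the smaller side, because the first $|C_1|$ vertices dequeued on the $C_2$ side may have arbitrarily large total degree, so the cost on the $C_2$ side is not $O(|C_1|+|E(C_1)|)$. This is easy to fix without affecting the final bound: either interleave by \emph{edge} rather than by vertex (then charge each edge to the component it lies in, and use that an edge is on the smaller-by-edges side at most $O(\log n)$ times), or, more simply, query the connectivity structure for the two component sizes after the split and run a single BFS only on the side with fewer vertices. With either fix your argument goes through and matches the paper's bound.
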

\begin{proof}
It is straight-forward to maintain the set $\textsc{Near-Heavy}(\mu, r, \tau)$ using claim \ref{clm:maintainNearHeavy}. We further observe that since the vertex set $\textsc{Near-Heavy}(\mu, r, \tau)$ is monotonically decreasing over time by the definition of the near-heavy/near-light decomposition. Therefore, we can model the data structure by running a decremental connectivity data structure on $G^*[\textsc{Near-Heavy}(\mu, r, \tau)]$. We can then forward edge deletions, and simulate transitions of vertices from near-heaviness to near-lightness by removing their incident edges and finally the vertex itself from the data structure. The data structure by Wulff-Nilsen  \cite{Wulff-Nilsen:2013:FDF:2627817.2627943} can fully support our claimed monitoring operation and establishes the update time. 
\end{proof}

\subsection{Efficient Clustering in Sparse Subgraphs}
\label{subsec:efficientclustering}

We take a layered approach at clustering using centers in $C_{\mu, \tau}$ as candidates for centers of clusters (we refer to selected candidates as \textit{active} centers). Throughout the algorithm, we maintain for each vertex in $C_{\mu, \tau}$ a level $i = \textsc{ClusterLevel}_{\mu, \tau}(c)$ in the hierarchy. Before we proceed our discussion, let us introduce some additional notation to ease the further discussion.

\begin{definition}
\label{def:coreCluster}
For any $v \in C_{\mu, r}$, at level $i = \textsc{ClusterLevel}_{\mu, \tau}(c)$, let us define the sets $\textsc{Cluster}_{\mu, \tau}(c)$ to be the set of vertices in the ball $B_{G^*}(c, \tau(1/\epsilon)^{i+1})$, and $\textsc{Core}_{\mu, \tau}(c)$ to be the set of vertices in the ball $B_{G^*}(c, \tau(1/\epsilon)^{i})$.
\end{definition}

In order to enforce sparseness of our emulator, we want to maintain $\textsc{ClusterLevel}_{\mu, \tau}(c)$ such that at all times $|\textsc{Cluster}_{\mu, \tau}(c)| \leq n^{1/k}  |\textsc{Core}_{\mu, \tau}(c)|$. This ensures that we provide many vertices (the vertices in $\textsc{Core}_{\mu, \tau}(c)$) a good 2-hop using relatively few edges in the emulator (namely at most an additional fraction of $n^{1/k} = n^{o(1)}$). Additionally, we want to maintain the level of each $c$ to be monotonically decreasing such that we do not have to change the clustering too many times since it is rather expensive to change cluster centers. Let us now prove that these properties can be enforced simultaneously.

\begin{claim}
\label{clm:maintainlevels}
We can maintain for any $(\mu, \tau(1/\epsilon)^{k+1})$-\textit{light} vertex $c \in C_{\mu, \tau}$ an index $i = \textsc{ClusterLevel}_{\mu, \tau}(c) \in [0, k]$ with the following properties:
\begin{enumerate}
    \item $\textsc{ClusterLevel}_{\mu, \tau}(c)$ is monotonically decreasing over time, and
    \item $n^{i/k} \leq |\textsc{Core}_{\mu, \tau}(c)| \leq  |\textsc{Cluster}_{\mu, \tau}(c)| \leq n^{(i+1)/k}$. \label{prop:sparse}
\end{enumerate}
over all stages in $O(\mu \tau(1/\epsilon)^{k+1})$ time.
\end{claim}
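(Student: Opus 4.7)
The plan is to maintain, for each such light vertex $c$ independently, a single ES-tree from Lemma~\ref{lma:maintainBalls} rooted at $c$ to depth $\tau(1/\epsilon)^{k+1}$. Since $c$ is $(\mu, \tau(1/\epsilon)^{k+1})$-light by assumption, this ES-tree sees at most $\mu$ edges and thus runs in $O(\mu\tau(1/\epsilon)^{k+1})$ total time. I invoke the partition feature of Lemma~\ref{lma:maintainBalls} with breakpoints at $\tau(1/\epsilon)^0, \tau(1/\epsilon)^1, \ldots, \tau(1/\epsilon)^{k+1}$, so that each reported distance-change crosses at most one boundary and affects exactly one count; keeping the prefix sums $s_j := |B_{G^*}(c, \tau(1/\epsilon)^j)|$ current then costs $O(1)$ per such update, which stays within the $O(\mu r + \Delta)$ budget of the lemma.

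For the existence of a valid index at any stage, I define the potential $f(j) := \log_n s_j - j/k$. Since $c \in B_{G^*}(c,\tau)$ we have $s_0 \geq 1$ and thus $f(0) \geq 0$, while $s_{k+1} \leq n$ gives $f(k+1) \leq 1 - (k+1)/k < 0$. I choose $i$ to be the largest index in $[0,k]$ with $f(i) \geq 0$; maximality then forces $f(i+1) < 0$. Unpacking the definitions, $f(i) \geq 0$ reads $s_i \geq n^{i/k}$ and $f(i+1) < 0$ reads $s_{i+1} < n^{(i+1)/k}$, which together with the trivial monotonicity $s_i \leq s_{i+1}$ is exactly property~(\ref{prop:sparse}).

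For monotonicity of $i$ over time, I use that edge deletions in $G^*$ can only increase distances, so each $s_j$ — and therefore each $f(j)$ — is non-increasing. Consequently the set $\{j \in [0,k+1] : f(j) \geq 0\}$ only loses elements, so its maximum is non-increasing. The update rule is: watch the precomputed threshold $n^{i/k}$, and whenever $s_i$ falls below it, decrement $i$ by one and repeat until $f(i) \geq 0$ is restored (which always terminates at worst at $i = 0$, since $f(0) \geq 0$ holds throughout). Because $i$ can decrease at most $k+1$ times in total, the cumulative bookkeeping for all decrements is $O(k)$, dominated by the ES-tree cost.

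The only delicate step I anticipate is verifying that the prefix sums $s_j$ really can be updated in $O(1)$ per reported distance change: this holds because in unweighted graphs the ES-tree increases any distance estimate by exactly one at a time, so a single update can move a vertex across at most one radius threshold $\tau(1/\epsilon)^j$, altering exactly one $s_j$. Everything else — existence of $i$, monotonicity of $i$, and the final total time bound $O(\mu\tau(1/\epsilon)^{k+1})$ dominated by the underlying ES-tree — then follows immediately.
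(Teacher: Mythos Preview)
Your proof is correct and follows essentially the same approach as the paper: both run the ES-tree of Lemma~\ref{lma:maintainBalls} to depth $\tau(1/\epsilon)^{k+1}$ with the partition at radii $\tau(1/\epsilon)^j$, track the sizes $s_j = |B_{G^*}(c,\tau(1/\epsilon)^j)|$, and pick $i$ so that $s_i \geq n^{i/k}$ while $s_{i+1} < n^{(i+1)/k}$, with monotonicity coming from balls only shrinking in a decremental graph. The paper defines $i$ as the largest index with $s_j \geq n^{j/k}$ for \emph{all} $j \leq i$, whereas you take the largest $i$ with $s_i \geq n^{i/k}$; these can differ, but both satisfy property~(\ref{prop:sparse}) and both are monotone for the same reason, so the distinction is cosmetic.
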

\begin{proof}
Let $\mathcal{P} = \{ [0, \tau(1/\epsilon)^{0}), (\tau(1/\epsilon)^{0}, \tau(1/\epsilon)^{1}], .. , (\tau(1/\epsilon)^{k}, \tau(1/\epsilon)^{k+1}] \}$ be a partition of the distances $[0, \tau(1/\epsilon)^{k+1}]$. We use an ES-tree as described in lemma \ref{lma:maintainBalls} with $\mathcal{P}$ to depth $\tau(1/\epsilon)^{k+1}$ to maintain for each interval in $\mathcal{P}$, the number of vertices at distance in the interval (it is here that we use that $c$ is $(\mu, \tau(1/\epsilon)^{k+1})$-\textit{light} by assumption).

Using this information, we maintain $\textsc{ClusterLevel}_{\mu, \tau}(c)$ to be the largest integer $i$ such that for all $j \leq i$, 
\[
|B_{G^*}(c, \tau(1/\epsilon)^j)| \geq n^{j/k}.
\]
To see that $\textsc{ClusterLevel}_{\mu, \tau}(c)$ is decreasing observe that once $\textsc{ClusterLevel}_{\mu, \tau}(c)$ was set to a level $i$, we certified that $|B_{G^*}(c, \tau(1/\epsilon)^{i+1})| < n^{(i+1)/k}$ for the rest of the algorithm. This follows because balls are shrinking in size since distances are monotonically increasing in the underlying decremental graph. 

Finally, let us prove property \ref{prop:sparse}. Clearly, if the first inequality is always preserved by our choice of $i$ and the second inequality follows from $\textsc{Core}_{\mu, \tau}(c) \subseteq \textsc{Cluster}_{\mu, \tau}(c)$. The third inequality is again preserved by our choice of $i$ and the observation that in decremental graphs, balls only decrease in size.

The running time of the algorithm follows from lemma \ref{lma:maintainBalls}. 
\end{proof}

For convenience, we define $\textsc{ClusterLevel}_{\mu, \tau}(c) = \infty$ if a vertex $c$ in $C_{\mu, \tau}$ is $(\mu, 2\tau(1/\epsilon)^{k+1})$-\textit{heavy} (observe that double the radius used to determine heaviness due to technical reasons arising in the clustering algorithm). 

Let us now restate the algorithm to select the centers of our hierarchy and analyze it more formally: for each level $i \in [0,k]$, we maintain a set of \textit{active} centers $A_i$ that is a subset of the vertices in $C_{\mu, \tau}$ at cluster level $i$ such that no two vertices in $A_i$ are at distance less than $2\tau(1/\epsilon)^{i}$. In particular, the balls $B_{G^*}(c, \tau(1/\epsilon)^{i})$ for all $c \in A_i$ are pairwise disjoint.

\begin{claim}
\label{clm:maintainA}
We can maintain the dynamic sets $A_0, A_1, .., A_k$ such that 
\begin{enumerate}
    \item each $A_i$ is a maximal subset of vertices in $C_{\mu, \tau}$ at level $i$ such that two any vertices in $A_i$ are at distance at least $2\tau(1/\epsilon)^i$, \emph{and} \label{prop:Aisalwaysfilled}
    \item once a center $c \in C_{\mu, \tau}$ leaves a set $A_i$, it never joins it again. \label{prop:leaveA}.
\end{enumerate}
We can maintain these properties in time $O(n\mu (1/\epsilon)^{k+1})$.
\end{claim}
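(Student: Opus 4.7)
The plan is to maintain, for each $c \in C_{\mu,\tau}$ with finite cluster level, an ES-tree rooted at $c$ as provided by Lemma \ref{lma:maintainBalls} with parameter $\mu$ and depth $2\tau(1/\epsilon)^{k+1}$. Inside each tree we use the mark/query functionality (with one bit of state per level $j \in [0,k]$) to track which vertices currently belong to $A_j$; whenever a center $c$ joins or leaves some $A_j$, we propagate the mark-update to every ES-tree that currently contains $c$ (each center maintains a back-pointer list of the trees it lies in, populated at initialization and pruned whenever the ES-tree reports that $c$ has left its depth bound). The greedy selection of $A_i$ is then driven by two kinds of events. First, when $c$ first attains cluster level $i$, we query its ES-tree for a marked $A_i$-member at distance at most $2\tau(1/\epsilon)^i$, and if none exists we add $c$ to $A_i$. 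Second, whenever a marked $A_i$-member at some level-$i$ center $c'$ disappears---either because the blocker's cluster level dropped (so it was removed from $A_i$) or because the ES-tree at $c'$ reports the blocker has drifted past distance $2\tau(1/\epsilon)^i$---we re-evaluate $c'$ for membership in $A_i$.

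For correctness, property \ref{prop:leaveA} is immediate: the only way a center leaves $A_i$ in this algorithm is through a strict decrease of its cluster level below $i$, and by Claim \ref{clm:maintainlevels} cluster levels are monotonically non-increasing, so a departed center can never return. The pairwise separation $\geq 2\tau(1/\epsilon)^i$ inside $A_i$ holds by construction, since every addition explicitly certifies no marked $A_i$-member within that distance and distances in $G^*$ can only grow thereafter. Maximality (property \ref{prop:Aisalwaysfilled}) holds because the only two ways a level-$i$ center $c'$ can transition from ineligible to eligible are (i) removal of some blocker from $A_i$ and (ii) an edge deletion that drives a blocker out of $B_{G^*}(c', 2\tau(1/\epsilon)^i)$, and both events trigger a fresh eligibility check.

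The running time is dominated by the $|C_{\mu,\tau}| = O(n/\tau)$ ES-trees, each of depth $O(\tau(1/\epsilon)^{k+1})$, which by Lemma \ref{lma:maintainBalls} contribute $O(n\mu (1/\epsilon)^{k+1})$ in total plus an additive term linear in the number of auxiliary mark/unmark/query operations. By property \ref{prop:leaveA} each center joins and leaves each of the $k+1$ levels at most once, giving $O(nk/\tau)$ join/leave events; each such event propagates marks only to those trees that currently contain the moved center, and the total propagation work is absorbed into the per-tree budget of Lemma \ref{lma:maintainBalls} since marking a vertex inside a tree is $O(1)$.

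The main obstacle I anticipate is accounting for the re-check triggers, since in principle a single blocker drifting out of a ball could cascade into many re-checks across the algorithm. The clean charging scheme is to associate every re-check of $c'$ with the specific ES-tree event at $c'$ (or at the removed blocker) that produced the displacement report, so that the total number of triggers is bounded by the total number of distance-increase reports across all trees; this count is exactly what Lemma \ref{lma:maintainBalls} already charges to the $O(\mu \tau(1/\epsilon)^{k+1})$ per-tree budget. With this amortization the auxiliary work is subsumed and the overall bound $O(n\mu(1/\epsilon)^{k+1})$ follows.
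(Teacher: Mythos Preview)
Your approach is essentially the same as the paper's: greedily build each $A_i$ by running marked ES-trees (Lemma~\ref{lma:maintainBalls}) at the centers of $C_{\mu,\tau}$, add a center whenever its tree sees no $A_i$-marked vertex within the separation radius, and remove it only when its cluster level drops. The paper differs in one implementation detail: instead of a single tree of depth $2\tau(1/\epsilon)^{k+1}$ per center, it spins up a \emph{fresh} ES-tree of depth exactly $2\tau(1/\epsilon)^{i}$ each time $c$ enters a new level $i$, so that the built-in ``is there a marked vertex in $B(c,r)$'' query from Lemma~\ref{lma:maintainBalls} directly answers the eligibility test. Your single deep tree instead needs a \emph{radius-restricted} marked-vertex query (and similarly a radius-restricted drift report), which Lemma~\ref{lma:maintainBalls} as stated does not provide; you would have to combine the partition interface $V(P)$ with your per-level mark bits to synthesize it, and you should say so explicitly. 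With that small gap filled, both the correctness argument and the $O(n\mu(1/\epsilon)^{k+1})$ bound go through just as in the paper.
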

\begin{proof}
Our proof is quite similar to the proof of our cover data structure (lemma \ref{lma:coverefficientSimple}). Let $c \in C_{\mu, \tau}$ be assigned a new level $i$ (or if $\textsc{ClusterLevel}_{\mu, \tau}(c)$ is set for the first time to a value $\neq \infty$). Then, initiate a new ES-tree as described in lemma \ref{lma:maintainBalls} for each center $c \in C_{\mu, \tau}$ with depth $2(1/\epsilon)^i$ and initially mark all vertices in $B(c, 2(1/\epsilon)^i)$ that are in $A_i$. If at any stage there is no marked vertex in $B(c, 2(1/\epsilon)^i)$, then add $c$ to $A_i$ and run a BFS exploring $B(c, 2(1/\epsilon)^i)$ to mark itself in the ES-trees of each vertex at level $i$ that is at distance at most $2(1/\epsilon)^i$. If the level $i$ of $c$ is changed again, run a BFS to unmark $c$ at all ES-trees rooted at centers at level $i$ and at distance at most $2(1/\epsilon)^i$.

We use the algorithm presented in claim \ref{clm:maintainlevels} to maintain the levels. Since the marked vertices in $B(c, 2(1/\epsilon)^i)$ are at all times the vertices at the level $i$ that are at distance at most $2(1/\epsilon)^i$, inserting $c$ into $A_i$ once there are no marked vertices in its ES-tree enforces property \ref{prop:Aisalwaysfilled}.

Since we only remove a center $c$ from a set $A_i$ due to a level change, and since levels are monotonically decreasing over time, property \ref{prop:leaveA} is satisfied.

Finally, the running time required to maintain all ES-trees (including an ES-tree that monitors when vertices in $C_{\mu, \tau}$ transition to become $c \in C_{\mu, \tau}$ is $(\mu, 4\tau(1/\epsilon)^{k+1})$-\textit{light} and the level monitoring) is 
\[O(|C_{\mu, \tau}| \sum_{i \in [0, k]} \mu \tau(1/\epsilon)^i + |C_{\mu, \tau}| \mu (1/\epsilon)^{k+1}) = O(n \mu (1/\epsilon)^{k+1})
\]
since $\epsilon \leq 1/2$.
\end{proof}

Finally, we can state the emulator formally. 

\begin{definition}
Given the dynamic sets $A_0, A_1, .., A_k$ as described above, we define the weighted emulator $H^s_{\mu, \tau, \epsilon}$ such that:
\begin{itemize}
    \item each vertex $v \in V(H^s_{\mu, \tau, \epsilon})$ corresponds to a vertex in $V$ or a connected component in $G^*[\textsc{Near-Heavy}(\mu, 4\tau(1/\epsilon)^{k+1}, \tau)]$ (we call these vertices the \textit{component} vertices), and
    \item for every component vertex $c \in V(H^s_{\mu, \tau, \epsilon})$ corresponding to a connected component we have the edge $(c, v)$ for any $v \in V$ in $E(H^s_{\mu, \tau, \epsilon})$, and
    \item for every center $c \in A_i$, and $v \in \textsc{Cluster}_{\mu,\tau}(c)$ we have an edge $(c, v)$ to  $E(H^s_{\mu, \tau, \epsilon})$, and
    \item $E(H^s_{\mu, \tau, \epsilon})$ contains all edges $\{s\} \times B_{G^*}(s, 2\tau(1/\epsilon)^{k+1})$ that is to the vertices close to $s$.
\end{itemize}
We let the edge weights correspond to the distance between the endpoints in $G^*$ rounded up to the nearest multiple of $\epsilon\tau$, except for the edges incident to component vertices which are maintained with edge weight $\epsilon\tau$.
\end{definition}

The observant reader might have observed that we could have defined a conceptually easier emulator if we would have contracted vertices in the same connected components in $G^*[\textsc{Near-Heavy}(\mu, 4\tau(1/\epsilon)^{k+1}, \tau)]$ as we did in claim \ref{clm:guaranteeContractions}. However, the data structure that maintains the distance estimates using $H^s_{\mu, \tau, \epsilon}$ requires for technical reasons that each vertex in $V$ is explicitly present in the graph at all stages. We however observe that claim \ref{clm:guaranteeContractions} extends to give an upper bound for any underestimate on a shortest path in $H^s_{\mu, \tau, \epsilon}$.

\begin{lemma}
\label{lma:maintainEmu}
We can maintain the emulator $H^s_{\mu, \tau, \epsilon}$ in total time $O((n\mu+m\tau)(1/\epsilon)^{k+1} + m \log^2 n/ \log \log n)$.
\end{lemma}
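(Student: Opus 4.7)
The plan is to assemble the emulator on top of the four subroutines already established. Namely: (i) the layered cover $\mathcal{C}_\mu$ of Lemma~\ref{lma:efficientCover}, which gives $C_{\mu,\tau}$ and the pointers $p_{C_{\mu,\tau}}(v)$ in total time $O(n\mu \log n)$; (ii) the level assignments $\textsc{ClusterLevel}_{\mu,\tau}(c)$ for each $c \in C_{\mu,\tau}$ from Claim~\ref{clm:maintainlevels}, costing $O(\mu\tau(1/\epsilon)^{k+1})$ per center, which aggregates to $O(n\mu(1/\epsilon)^{k+1})$ because $|C_{\mu,\tau}| \leq 8n/\tau$; (iii) the active sets $A_0,\ldots,A_k$ via Claim~\ref{clm:maintainA} in $O(n\mu(1/\epsilon)^{k+1})$ time; and (iv) the near-heavy/near-light decomposition with parameters $(\mu, 4\tau(1/\epsilon)^{k+1},\tau)$ together with the connected components of $G^*[\textsc{Near-Heavy}(\mu,4\tau(1/\epsilon)^{k+1},\tau)]$ from Claims~\ref{clm:maintainNearHeavy} and~\ref{clm:heavysubgraphmaintain}, which together cost $O(n\mu(1/\epsilon)^{k+1} + m\log^2 n/\log\log n)$ since $r/\tau = 4(1/\epsilon)^{k+1}$.

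What remains is to materialize the four edge-families of $H^s_{\mu,\tau,\epsilon}$ on top of this infrastructure. The edges $\{s\}\times B_{G^*}(s, 2\tau(1/\epsilon)^{k+1})$ are read off from a plain ES-tree rooted at $s$ to depth $2\tau(1/\epsilon)^{k+1}$, contributing $O(m\tau(1/\epsilon)^{k+1})$. For each $c\in A_i$, the cluster edges $\{c\}\times \textsc{Cluster}_{\mu,\tau}(c)$ are produced by the ES-tree of $c$ (already paid for by Claim~\ref{clm:maintainA}) to depth $\tau(1/\epsilon)^{i+1}$: whenever a vertex enters, leaves, or changes distance in $B_{G^*}(c,\tau(1/\epsilon)^{i+1})$, the corresponding edge of $H^s_{\mu,\tau,\epsilon}$ is inserted, deleted, or reweighted, with the rounded weight $\lceil \mathbf{dist}_{G^*}/(\epsilon\tau)\rceil \cdot \epsilon\tau$ read from the ES-tree's distance estimate. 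When $c$ is added to or removed from $A_i$ (the latter happens at most once per level and hence at most $k+1$ times per center since levels only decrease), we add or remove all of its cluster edges in $O(|\textsc{Cluster}_{\mu,\tau}(c)|)$ time; since $c$ is $(\mu,\tau(1/\epsilon)^{k+1})$-\textit{light}, the ball is spanned by a BFS tree rooted at $c$ using at most $\mu$ edges and therefore $|\textsc{Cluster}_{\mu,\tau}(c)|\leq \mu+1$, so the aggregate cost is $O(n\mu k/\tau)$ which is subsumed.

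Finally, the edges from component vertices to near-heavy vertices are maintained reactively: when Claim~\ref{clm:maintainNearHeavy} reports that some $v$ transitions from near-heavy to near-light, we drop its incident component-edge; when Claim~\ref{clm:heavysubgraphmaintain} reports that a component splits into $C_1 \cup C_2$ with $|C_1|\leq|C_2|$, we allocate a new component vertex for $C_1$, iterate over the returned list, and redirect its edges, for $O(|C_1|)$ work. The classical smaller-half argument bounds the cumulative work here by $O(n\log n)$, which is subsumed by the other terms. Summing all five contributions gives $O((n\mu + m\tau)(1/\epsilon)^{k+1} + m\log^2 n/\log\log n)$. The main obstacle is not any individual bound but the bookkeeping: every edge insertion, deletion, and weight change in $H^s_{\mu,\tau,\epsilon}$ must be triggered exactly once per underlying event (a distance report from an ES-tree, a level decrease, an active-set transition, a near-light transition, or a component split) so that its cost can be charged against an event budget that has already been bounded by one of the preceding claims.
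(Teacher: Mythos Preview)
Your proposal is correct and follows essentially the same approach as the paper: you charge the cluster edges to the ES-trees already maintained in Claim~\ref{clm:maintainA}, the component-vertex edges to the smaller-half argument on top of Claim~\ref{clm:heavysubgraphmaintain}, and the source edges to a fresh ES-tree of depth $2\tau(1/\epsilon)^{k+1}$ from $s$. The paper's proof is terser (it does not separately itemize the cover and level-assignment costs, treating them as already absorbed into Claim~\ref{clm:maintainA}), but the structure and the accounting are the same.
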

\begin{proof}
On closer inspection, the edge weights of all edges incident to center vertices $c \in A_i$ can be maintained using the data structure in claim \ref{clm:maintainA}. Thus, maintaining the sets $A_i$ and these edge weights can be implemented in $O(n\mu (1/\epsilon)^{k+1})$ time. 

We can further use the data structure introduced in claim \ref{clm:heavysubgraphmaintain}, to detect when we have to insert a new vertex to $V(H^s_{\mu, \tau, \epsilon})$, i.e. every time the data structure returns a set $C_1$ that resulted from splitting a connected component $C$ previously represented by vertex $c$ into $C_1$ and $C_2 = C \setminus C_1$ (where $C_1$ is smaller than $C_2$) we add a new vertex $c_1$, remove the edges $\{c\} \times C_1$ from the emulator and add the new edges $\{c_2\} \times C_1$. Then $c$ represents the component $C_2$. This implies that every vertex changes its corresponding component vertex at most $O(\log n)$ times (because every time the component it is contained in halves). Since operations take linear time in the number of edge changes, we can bound the total update time due to component vertices and edges by $O(n \log n)$ which is subsumed in the update time to monitor the near-heavy/near-lightness of vertices.

Finally, we run an ES-tree from $s$ to depth $2\tau(1/\epsilon)^{k+1}$ on $G^*$ in time $O(m \tau(1/\epsilon)^{k+1})$ time to maintain all weights of the edges $\{s\} \times B_{G^*}(s, 2\tau(1/\epsilon)^{k+1})$ (and in fact monitors which edges can be removed since their endpoint leaves the ball). This completes the construction and analysis of the emulator.
\end{proof}

\section{Maintaining Distances using the Emulator}
\label{sec:usingEmulator}

Finally, we are ready to describe the data structure we use to find approximate shortest $s$ to $v$ distances using $H^s_{\mu, \tau, \epsilon}$. Whilst it is useful to have the monotone ES-tree as a conceptual starting point, our data structure requires more involved operations that are tailored towards maintaining $H^s_{\mu, \tau, \epsilon}$ efficiently.

To simplify presentation, we assume henceforth that $1/\epsilon$ and $\frac{m}{\mu}$ are integer. Further, we focus in this section on a specific scale of distances that we want to maintain, say we want to find estimates for the distances $\mathbf{dist}_{G^*}(u,v) \in [2^j, 2^{j+1})$ for some integer $\lg \frac{m}{\epsilon^{k+5} \mu} \leq j < \lg n$. We then use lemma \ref{lma:maintainEmu} to obtain an emulator $H^s_{\mu, \tau_j, \epsilon}$ where we set $\tau_j = \frac{\mu 2^j \epsilon^{k+4}}{m}$ (our restriction on $j$ implies that $\epsilon\tau_j$ is an integer). Further, we use the sets $A_0, A_1, .. , A_k$ used to maintain the emulator explicitly. Let us now state a definition of our data structure.

\begin{definition}
\label{def:almostMES}
Let an \textit{almost}-MES-tree $\mathcal{E}_j$ be a data structure that given a source $s \in V$, a depth $d > 0$, the sets $A_0, A_1, .. , A_k$ and $H^s_{\mu, \tau_j, \epsilon}$ maintains a distance estimate $\hat{l}(v)$ for every vertex $v \in V$ as follows:
\begin{itemize}
    \item Each distance estimate $\hat{l}(v)$ is initialized to the distance from $s$ to $v$ in the initial graph $H^s_{\mu, \tau_j, \epsilon}$. 
    \item If a distance estimate $\hat{l}(v)$ is at any stage increased beyond $d$, the vertex $v$ is removed from $\mathcal{E}_r$.
    \item If a vertex $v$ transitions from being $(\mu, 4\tau_j(1/\epsilon)^{k+1}, \tau_j)$-\textit{near-heavy} to $(\mu, 4\tau_j(1/\epsilon)^{k+1}, \tau_j)$-\textit{near-light}, we decrease $\hat{l}(v)$ by $2\epsilon\tau_j$.
    \item If a center $c$ is added to some $A_i$, $c$ checks each vertex $v \in \textsc{Core}_{\mu, \tau_j}(c)$ and updates its own distance estimate $\hat{l}(c) = \max_{v \in \textsc{Core}_{\mu, \tau_j}(c)} \hat{l}^{OLD}(v) - 8 \tau_j(1/\epsilon)^{i}$ where $\hat{l}^{OLD}(v)$ denotes the distance estimate obtained at the last stage for $v$. (This operation might decrease the distance estimate foiling the monotonicity of the function $\hat{l}$).
    \item Edge updates are handled in batches that correspond to the changes in $H^s_{\mu, \tau_j, \epsilon}$ during one stage. After all level updates due to the previous two operations are completed and edge changes are executed, we \uline{consolidate} all distance estimates according to the rule,
    \[
        \hat{l}^{NEW}(v) = \max\{ \hat{l}(v), \min_{(u,v) \in E(H^s_{\mu, \tau_j, \epsilon})} \hat{l}^{NEW}(u) + w(u,v)\}.
    \]
    Observe that this rule is the same rule as in a regular ES-tree procedure.
    \item After the new estimates are found, for any center $c \in A_i$ with $\hat{l}^{NEW}(v) > \hat{l}(v)$, we update for each vertex $v \in \textsc{Core}_{\mu, \tau_j}(c)$ the distance estimate $\hat{l}^{NEW}(v)$ to value 
    \[
    \max\{\hat{l}^{NEW}(v), \hat{l}^{NEW}(c) - \tau_j(1/\epsilon)^{k+1}\}.
    \]
    We say that $c$ \uline{drags} along the vertices in $\textsc{Core}_{\mu, \tau_j}(c)$. Whilst these updates can be evaluated in arbitrary order, dragging a center $c' \in A_j$ resulting in an estimate increase triggers a re-evaluation of the same procedure on $c'$. We do however not re-evaluate the \uline{consolidation} procedure. Observe that this might trigger a series of updates at the next stage. 
\end{itemize}
\end{definition}

We point out that the description above uniquely defines the values of distance estimates after each stage. Let us first establish that the data structure indeed maintains distance estimates that reflect actual distances in $G^*$. We then establish in the next section that the \textit{almost}-MES-tree can be implemented efficiently by designing the \uline{consolidation} procedure carefully.

Let us for convenience define $d' = (1-2\epsilon)d - 3\epsilon\tau_j - 4\tau_j(1/\epsilon)^{k+1}$. 

\begin{lemma}
\label{lma:stretchMES}
The \textit{almost}-MES-tree as defined above maintains for every vertex $v \in V$, with $\mathbf{dist}_{G^*}(s,v) \leq d'$ a distance estimates $\hat{l}(v)$ such that 
\[
\mathbf{dist}_{G^*}(s,v) - \epsilon2^j \leq \hat{l}(v) \leq (1+\epsilon)\mathbf{dist}_{G^*}(s,v) + 3\epsilon\tau_j.
\]
\end{lemma}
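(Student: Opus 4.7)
The plan is to prove the upper and lower bounds separately, with the upper bound by induction on the true distance and the lower bound by a careful accounting of the operations in Definition~\ref{def:almostMES} that can decrease estimates.

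For the upper bound $\hat{l}(v) \leq (1+\epsilon)\mathbf{dist}_{G^*}(s,v) + 3\epsilon\tau_j$, I would induct on $\delta = \mathbf{dist}_{G^*}(s,v)$ in steps of $\tau_j$, carrying a strengthened hypothesis mirroring the sketch in the overview: (a) every vertex $v$ with $\mathbf{dist}_{G^*}(s,v) \leq \delta$ satisfies the upper bound, and (b) every active center $c \in A_i$ with $\mathbf{dist}_{G^*}(s,c) \leq \delta + 4\tau_j(1/\epsilon)^i$ satisfies the sharper $\hat{l}(c) \leq \max\{(1+\epsilon)\mathbf{dist}_{G^*}(s,c) - 8\tau_j(1/\epsilon)^i, \mathbf{dist}_{G^*}(s,c)\} + 3\epsilon\tau_j$. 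The sharpening in (b) is what lets the induction close: when $c \in A_i$ relaxes the emulator edge to the vertex $u \in \textsc{Cluster}_{\mu,\tau_j}(c)$ on the shortest $s$-$c$ path closest to $s$, the edge weight is at most $\tau_j(1/\epsilon)^{i+1} + \epsilon\tau_j$, which is absorbed by the slack $-8\tau_j(1/\epsilon)^i$ once multiplicative errors are accounted for (using $\epsilon \leq 1/10$). The base case $\delta \leq 2\tau_j(1/\epsilon)^{k+1}$ follows from the direct edges $\{s\} \times B_{G^*}(s, 2\tau_j(1/\epsilon)^{k+1})$.

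For the inductive step of (a), every vertex $v$ is handled in one of two cases. Either $v$ is $(\mu, 4\tau_j(1/\epsilon)^{k+1}, \tau_j)$-\textit{near-light}, in which case Lemma~\ref{lma:efficientCover} and the maximality established by Claim~\ref{clm:maintainA} place an active center $c'' \in A_i$ at distance at most $4\tau_j(1/\epsilon)^i$, so $v \in \textsc{Cluster}_{\mu,\tau_j}(c'')$ provides the emulator edge to which we apply the inductive hypothesis on $c''$; or $v$ is near-heavy, in which case $v$ uses the two emulator edges of weight $\epsilon\tau_j$ through its component vertex to reach a near-light vertex $u$ adjacent to $v$'s component on the shortest $s$-$v$ path, and the $2\epsilon\tau_j$ overhead is absorbed by the $+3\epsilon\tau_j$ slack in combination with the once-per-vertex $2\epsilon\tau_j$ decrease triggered at the near-heavy-to-near-light transition in Definition~\ref{def:almostMES}. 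Finally, the dragging rule sets $\hat{l}(v) \leq \hat{l}(c) - \tau_j(1/\epsilon)^{k+1}$ for $v \in \textsc{Core}_{\mu,\tau_j}(c)$ with $\mathbf{dist}_{G^*}(c,v) \leq \tau_j(1/\epsilon)^{k+1}$; substituting (b) for $c$ and combining with the triangle inequality shows this quantity stays within the upper bound.

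For the lower bound $\hat{l}(v) \geq \mathbf{dist}_{G^*}(s,v) - \epsilon 2^j$, I would track the three sources of underestimation relative to the MES-style quantity $\mathbf{dist}_{H^s_{\mu,\tau_j,\epsilon}}(s,v)$. First, the emulator itself can underestimate $\mathbf{dist}_{G^*}(s,v)$ because of the contracted component shortcuts; Claim~\ref{clm:guaranteeContractions} with $r = 4\tau_j(1/\epsilon)^{k+1}$ and $\tau = \tau_j$ bounds this by $O(\tau_j(1/\epsilon)^{k+1} m/\mu)$, which after substituting $\tau_j = \mu 2^j \epsilon^{k+4}/m$ is $O(\epsilon^{3} 2^j)$. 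Second, a vertex $v$ decreases $\hat{l}(v)$ by $2\epsilon\tau_j$ at most once in its lifetime (upon becoming near-light), contributing $O(\epsilon\tau_j)$. Third, when $v = c$ joins some $A_i$, $\hat{l}(c)$ can drop by at most $8\tau_j(1/\epsilon)^i$; by Claim~\ref{clm:maintainA}(2) and the monotonicity of $\textsc{ClusterLevel}$ from Claim~\ref{clm:maintainlevels}, $c$ joins each $A_i$ at most once, so the total drop over $v$'s lifetime is at most $\sum_{i=0}^k 8\tau_j(1/\epsilon)^i = O(\tau_j(1/\epsilon)^{k+1}) = O(\epsilon^{3}2^j)$. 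The dragging rule only raises $\hat{l}(v)$, so it contributes nothing. Summing, the total deficit is $o(\epsilon 2^j)$, yielding the claimed bound.

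The main obstacle I expect is verifying that these arguments survive the non-monotone dynamics across consecutive stages, specifically that the consolidation procedure never propagates a decrease through the graph: if a single vertex's estimate dips, a neighbor's estimate may be recomputed by the rule $\max\{\hat{l}(v), \min_u \hat{l}(u)+w(u,v)\}$ which preserves old values, but this has to be checked carefully when a whole batch of edge updates and level-change decreases occur together before consolidation runs. A related technicality is ensuring well-definedness and termination of the dragging loop in the last bullet of Definition~\ref{def:almostMES}, which follows because each re-invocation strictly raises some active center's estimate towards a bounded ceiling, but this must be made precise before the induction in (a) and (b) can treat the post-consolidation, post-dragging snapshot as a single consistent state.
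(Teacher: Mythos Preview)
Your upper-bound plan is essentially the paper's: a double induction (outer on stages, inner on $\delta$), with a strengthened hypothesis that splits into the general-vertex case and the active-center case with extra slack, plus the near-light/near-heavy case distinction. The paper carries three inner hypotheses (near-light vertices with $+\epsilon\tau_j$, near-heavy with $+3\epsilon\tau_j$, active centers with slack $-6\tau_j(1/\epsilon)^i$), but your two-part version with $-8\tau_j(1/\epsilon)^i$ is the same shape. What you flag as the ``main obstacle'' --- showing the invariants survive the pre-consolidation decreases, the consolidation itself, and the dragging loop --- is exactly what the paper's outer induction on stages handles explicitly, so you have correctly identified the missing piece of your own argument.

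Your lower-bound approach is different from the paper's and more complicated than necessary. The paper does \emph{not} track the individual decrease events (your sources 2 and 3). Instead it observes that the consolidation rule $\hat{l}^{NEW}(v) \geq \min_{(u,v)} \hat{l}^{NEW}(u) + w(u,v)$, traced back to $s$, forces $\hat{l}^{NEW}(v) \geq \mathbf{dist}_{H^s_{\mu,\tau_j,\epsilon}}(s,v)$ after \emph{every} consolidation, irrespective of how much the estimates were decreased just before. Then Claim~\ref{clm:guaranteeContractions} alone gives $\mathbf{dist}_{H^s_{\mu,\tau_j,\epsilon}}(s,v) \geq \mathbf{dist}_{G^*}(s,v) - \epsilon 2^j$, and since dragging only raises estimates, the lower bound holds at the end of every stage. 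In other words, your sources 2 and 3 are absorbed by consolidation and never accumulate against the final bound; summing them over the lifetime of the algorithm is unnecessary and obscures the one-line argument.
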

\begin{proof}
In the overview section, we already sketched a proof for the upper bound of the stretch on any version of $H^s_{\mu, \tau_j, \epsilon}$. Here, we use a similar approach to prove the upper bound on $\hat{l}(v)$ by using induction.

\begin{claim}
For any $\epsilon < 1/10$, $\tau \geq 1$, at any stage $i \geq 0$, for any $0 \leq \delta \leq d'$, 
\begin{enumerate}
    \item we have for every vertex $v \in V$ that is incident to a $(\mu, 4\tau_j(1/\epsilon)^{k+1}, \tau_j)$-\textit{near-light} vertex and is at distance $\delta$ from $s$ in $G^*$ that $\hat{l}(v) \leq (1+\epsilon)\mathbf{dist}_{G^*}(s,v) + \epsilon\tau_j$, and
    \item we have for every vertex $(\mu, 4\tau_j(1/\epsilon)^{k+1}, \tau_j)$-\textit{near-heavy} vertex $v \in V$, at distance $\delta$ from $s$ in $G^*$ that $\hat{l}(v) \leq (1+\epsilon)\mathbf{dist}_{G^*}(s,v) + 3\epsilon\tau_j$, and
    \item for every active center $c \in A_i$ at distance at most $\delta + 4\tau_j(1/\epsilon)^{i}$ from $s$ in $G^*$, we have that $\hat{l}(v) \leq \max\{(1+\epsilon)\mathbf{dist}_{G^*}(s,c) - 6\tau_j(1/\epsilon)^{i}, \mathbf{dist}_{G^*}(s,c) + \epsilon\tau_j\}$.
\end{enumerate}
\end{claim}
\begin{proof}
For the proof of this claim, we use the convention of letting $\hat{\pi}_{u,v}$ refer to the shortest path from $u$ to $v$ in $H^s_{\mu, \tau_j, \epsilon}$. We prove the claim using outer induction on the stages and inner induction on $\delta$.\\

\noindent\uline{Base Case (Outer induction on stages): Stage $0$.} 
\begin{itemize}
    \item \uline{Base Case (Inner induction on $\delta$): $\delta \leq \tau_j(1/\epsilon)^{k+1}$.} By definition of $H^s_{\mu, \tau_j, \epsilon}$, there is a direct edge from $s$ to $v$ (observe that this is true for all $v$ in $B_{G^*}(s, 2\tau_j(1/\epsilon)^{k+1})$, thus this is also true for all active centers). Since this edge has weight at most $\lceil \mathbf{dist}_{G^*}(s,v) / \epsilon\tau_j\rceil \epsilon\tau_j \leq \mathbf{dist}_{G^*}(s,v) + \epsilon\tau_j$ both properties are satisfied. 
    
    \item \uline{Inductive step (Induction on $\delta$): $\delta > \tau_j(1/\epsilon)^{k+1}$.} Let us prove in fact a slightly stronger statement: we can find for each vertex $v$ under consideration at the current $\delta$ a single edge $(u,v)$ such that $u$ is significantly closer to $s$ and we can therefore use the inner IH on $u$ to establish the final claim (except for heavy vertices which require two edges). 
    
    Let us start by observing that for every $v \in V$ where $v$ is incident to a $(\mu, 4\tau_j(1/\epsilon)^{k+1}, \tau_j)$-\textit{near-light} vertex, it is $(\mu, 4\tau_j(1/\epsilon)^{k+1} - 2\tau_j - 1)$-\textit{light}. Therefore there exists a center $c = p_{C_{\mu, \tau_j}}(v)$ in $C_{\mu, \tau_j}$ at distance at most $\tau_j$ (by lemma \ref{lma:coverefficientSimple}). It is straight-forward to establish that $c$ is $(\mu,  2\tau_j(1/\epsilon)^{k+1})$-\textit{light} and therefore  $i = \textsc{ClusterLevel}_{\mu, \tau}(c)$ is a value in $[0,k]$ by claim \ref{clm:maintainlevels}. By claim \ref{clm:maintainA} either $c$ is active or there exist another vertex $c'$ in $A_i$ that is at distance at most $2\tau_j(1/\epsilon)^i$ from $c$. Let $a$ denote $c$ in the former case, or $c'$ in the latter. 
    
    We have $\mathbf{dist}_{G^*}(v, a) \leq 2\tau_j(1/\epsilon)^i + 2\tau_j$. We have therefore that $\mathbf{dist}_{G^*}(s,a) < \delta + 4\tau_j(1/\epsilon)^i$. Thus, we can invoke the inner IH on $a$. Together, with the observation that the upper bound on $\mathbf{dist}_{G^*}(v, a)$ also implies that there is an edge $(a,v)$ in $H^s_{\mu, \tau_j, \epsilon}$, we obtain that
    \[
    \hat{l}(v) \leq \hat{l}(a) + w(a,v) \leq (1+\epsilon)\mathbf{dist}_{G^*}(s,v) - 6\tau_j(1/\epsilon)^i + 2\tau_j + 3\epsilon \tau_j \leq (1+\epsilon)\mathbf{dist}_{G^*}(s,v)
    \]
    where we again invoke the triangle inequality, establishing the first property of the claim.
    
    If $v$ is not incident to a $(\mu, 4\tau_j(1/\epsilon)^{k+1}, \tau_j)$-\textit{near-light} vertex, it is in a component in $G^*[\textsc{Near-Heavy}(\mu, 4\tau_j(1/\epsilon)^{k+1}, \tau_j)]$. Let $c$ be the vertex that corresponds to this component in $H^s_{\mu, 4\tau_j(1/\epsilon)^{k+1}, \epsilon}$. Further let $w$ be the vertex in the connected component corresponding to $c$ that is closest to $s$ on the path $\pi_{s,v}$. Clearly, $w$ is closer to $s$ than $v$ and incident to a $(\mu, 4\tau_j(1/\epsilon)^{k+1}, \tau_j)$-\textit{near-light} vertex. Thus, we can invoke the inner IH on $w$ and obtain that $\hat{l}(w) \leq (1+\epsilon) \mathbf{dist}_{G^*}(s,w)+\epsilon\tau_j$. But then using the edges $(w,c)$ and $(c,v)$, we must have a path from $s$ to $v$ of weight $(1+\epsilon) \mathbf{dist}_{G^*}(s,w)+ \epsilon\tau_j + 2\epsilon\tau_j < (1+\epsilon) \mathbf{dist}_{G^*}(s,v)+3\epsilon\tau_j$, as required.
    
    Finally, let us establish the upper bound on the distance estimates for any $c \in A_i$. We therefore let $w$ denote the vertex closest to $s$ on $\pi_{s,c} \cap \textsc{Cluster}_{\mu, \tau_j}(c)$. By choice of $w$, we have $\mathbf{dist}_{G^*}(s,w) = \mathbf{dist}_{G^*}(s,c) - \tau_j(1/\epsilon)^{i+1} = \delta + 4\tau_j(1/\epsilon)^{i} - \tau_j(1/\epsilon)^{i+1} < \delta$. Thus, we can invoke the inner IH on $w$ to obtain that $\hat{l}(w) \leq (1+\epsilon)\mathbf{dist}_{G^*}(s,w)$. Using the edge $(w,c)$ in conjunction to this upper bound on $\hat{l}(w)$, we can also upper bound $\hat{l}(c)$ by $(1+\epsilon)\mathbf{dist}_{G^*}(s,v) + \tau_j(1/\epsilon)^{i+1} \leq (1+\epsilon)\mathbf{dist}_{G^*}(s,w) - 4\epsilon\tau_j(1/\epsilon)^{i+1} \leq (1+\epsilon)\mathbf{dist}_{G^*}(s,w) - 4\tau_j(1/\epsilon)^{i}$.
\end{itemize}

Finally, we did not discuss the effect of setting distance estimates to $\infty$ after they were increased beyond $d$. However, by choice of $d'$, it is straight-forward to check the proof above to see that none of our proofs relies on a distance estimate that is a distance larger $d$ (in fact, this is a rather technical detail arising from our data structure). This will also be the case for the inductive step that is proved next and we therefore omit the discussion.\\

\noindent\uline{Induction (Outer induction on stages): Stage $i$ where we assume that the assumption holds for all previous stages.} Before we again delve into a case discussion, let us first prove that the claim on the distance estimates claim is true just before the \uline{consolidation} procedure is invoked. 
    
By the outer IH, for each vertex $v$ at distance $\delta$, the distance estimate $\hat{l}(v)$ satisfied the guarantees given in the claim if it did not transition from being $(\mu, 4\tau_j(1/\epsilon)^{k+1}, \tau_j)$-\textit{near-heavy} to  $(\mu, 4\tau_j(1/\epsilon)^{k+1}, \tau_j)$-\textit{near-light} or joined a set $A_i$ for any $i$. 
    
Let us first consider the former case: if a vertex $v$ transitioned during the stage and didn't join a set $A_i$, then we still have by outer IH that $\hat{l}(v) \leq \mathbf{dist}_{G^*}(s,v) + 3\epsilon\tau_j$ (it actually states this property for the graph $G^*$ at stage $i-1$, however distances increase in decremental graphs over the stages and therefore we can consider the current version of the graph). But our estimate update procedure decreases the level of every transitioning vertex by $2\epsilon\tau_j$, thus the level satisfies the claim above.
    
Next, let us consider that $v$ joined $A_i$: Then we set $\hat{l}(c) = \max_{v \in \textsc{Core}_{\mu, \tau_j}(c)} \hat{l}^{OLD}(v) - 8 \tau_j(1/\epsilon)^{i}$. But since each of the vertices in $\textsc{Core}_{\mu, \tau_j}(c)$ are at distance at most $\tau_j(1/\epsilon)^{i}$ from $c$, we have by the outer IH on the vertices in $\textsc{Core}_{\mu, \tau_j}(c)$, that this sets $\hat{l}(v)$ to at most $(1+\epsilon)\mathbf{dist}_{G^*}(s,v) - 6 \tau_j(1/\epsilon)^{i}$.
    
Let us next observe that the proof given for stage $0$, extends to establish that after the \uline{consolidation} procedure has finished, all distance estimates still stipulate the claim since we always only expose the last edge on the path and use the claim to establish that the endpoint is significantly closer to $s$ or in the case of heavy vertices, use the edge $(w,c)$ and $(c,v)$ where $c$ is the component vertex, but since the component vertex is not involved in any cluster, its level remains always equal the one of the vertex in the connected component with the lowest distance estimate plus an additive error of $\epsilon\tau_j$ resulting from the edge, thus the two edges exposed always offer a legitimate path. 

Finally, we claim that the last step of the algorithm does not increase any distance estimate beyond the claimed upper bounds. We prove by induction on the number $t$ of centers that completed the \uline{dragging} step. In the base case $t=0$, this is vacuously true by our preceding discussion. Let us now show that it also holds for any $t > 0$, where the core of center $c \in A_i$ was updated, to prove the inductive step. By the IH, no distance estimate was increased beyond the claimed upper bound, before the \uline{dragging} step took place. But since this implies that $\hat{l}^{NEW}(c) \leq \mathbf{dist}_{G}(s,c)$ and every vertex $v$ in the core $\textsc{Core}_{\mu,\tau_j}(c)$ is at distance at most $\tau_j(1/\epsilon)^k$. Thus, setting the distance estimate of $v$ to $\max\{\hat{l}^{NEW}(v), \hat{l}^{NEW}(c) - \tau_j(1/\epsilon)^{k+1}\}$ can not push any vertex beyond the claimed bounds (assuming $\epsilon \leq \frac{1}{10}$).

This completes the inductive step and thereby establishes the claim.
\end{proof}

It remains to prove a lower bound on $\hat{l}(v)$. We therefore simply observe that claim \ref{clm:guaranteeContractions} upper bounds the negative additive error by $\frac{40 \tau_j(1/\epsilon)^{k+2} m}{\mu} = \epsilon2^j$ (by choice of $\tau_j$). Since the \uline{consolidation} procedure ensures that on finishing for each distance estimate $\hat{l}(v)$ there exists a $s$ to $v$ path in $H^s_{\mu, \tau_j, \epsilon}$ that is of weight at least $\hat{l}(v)$ (one could recursively expose the last edge on the $s$ to $v$ path using the \uline{consolidation}-rule). Since we might only overestimate estimates based on paths this procedure ensures that the lower bound of claim \ref{clm:guaranteeContractions} extends to the distance estimates. Further, after the \uline{consolidation} procedure, distance estimates can only be increased by the algorithm.
\end{proof}

\section{Implementing the almost-MES-tree efficiently}
\label{sec:implementMES}

Whilst a straight-forward implementation of the almost-MES-tree is possible, we need a rather small alternation of the classic implementation where edges are mapped to some endpoint and only distance estimate changes of that endpoint cause the algorithm to consider this edge whilst the other endpoint only finds the edge if it is an edge that satisfies the \uline{consolidation} rule. This alternation was previously introduced and described in  \cite{bernstein2017deterministic}. Let us now discuss a formal definition.

We let $E_{ALL}$ be the collection of all edges that appear in any version of the dynamic graph $H^s_{\mu, \tau_j, \epsilon}$. Here, we treat edges that were re-inserted as new edges, i.e. if $(u,v)$ appears in a version of the graph, is then removed and re-inserted, both versions appear in $E_{ALL}$. This ensures that there is a one-to-one correspondence between edges in $E_{ALL}$ and edge insertions into $H^s_{\mu, \tau_j, \epsilon}$.

\begin{definition}[c.f.  \cite{bernstein2017deterministic}, Def. 2.5]
A dynamic \emph{assignment} $A : E_{ALL} \rightarrow V(H^s_{\mu, \tau_j, \epsilon})$ is a function that maps each edge $(u,v)$ in $E_{ALL}$ to one of its endpoints $u$ or $v$. Further, $A$ must assign each edge $(u,v)$ to one of its endpoints at the moment when $(u,v)$ enters the edge set and cannot be changed afterwards.
\end{definition}

\begin{lemma}[ \cite{bernstein2017deterministic}, Lemma 2.2]
\label{lma:efficientalmostMES}
Consider an almost-MES-tree $\mathcal{E}_j$ as described in definition \ref{def:almostMES}. Then, given a dynamic assignment $A$, let $\Delta(u,v)$ be the total amount of distance estimate changes of $A(u,v)$ in $\mathcal{E}_j$ whilst $(u,v)$ is in the dynamic graph $H^s_{\mu, \tau_j, \epsilon}$. Then, we can implement the almost-MES-tree $\mathcal{E}_j$ as described in definition \ref{def:almostMES} with total time $O(|E_{ALL}| + \sum_{(u,v) \in E_{ALL}} \Delta(u,v) / \epsilon\tau_j)$ and constant worst-case query time.
\end{lemma}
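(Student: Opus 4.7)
The plan is to follow the edge-assignment trick of Bernstein--Chechik, adapted to handle the three non-standard features of Definition~\ref{def:almostMES}: the occasional $-2\epsilon\tau_j$ updates at near-heavy/near-light transitions, the re-initialization of $\hat{l}(c)$ when a center $c$ joins some $A_i$, and the dragging of core vertices. First I would rescale all edge weights in $H^s_{\mu,\tau_j,\epsilon}$ by $\epsilon\tau_j$ so that every weight is a positive integer and every $\hat{l}(v)$ takes values in $\{0,1,\ldots,\lceil d/\epsilon\tau_j\rceil\}$. For each vertex $v$ I would maintain a bucket priority queue keyed by $\hat{l}(v)$ together with an adjacency list $L(v)$ storing exactly those edges $(u,v) \in E(H^s_{\mu,\tau_j,\epsilon})$ with $A(u,v)=v$. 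As in the standard ES-tree, whenever $\hat{l}(v)$ changes by one integer step the algorithm scans $L(v)$ to recompute the right-hand side of the consolidation rule but does \emph{not} touch $L(u)$ at the other endpoint.

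Second, I would handle each non-standard operation in turn. Edge insertions push the new edge into $L(A(u,v))$ and enqueue $A(u,v)$ for one rescan. A near-heavy to near-light transition of $v$ decrements $\hat{l}(v)$ by one integer step and re-enqueues $v$; since every vertex transitions at most once, this contributes an additive $O(n)$ subsumed in $O(|E_{ALL}|)$. When a center $c$ joins $A_i$, the algorithm scans $\textsc{Core}_{\mu,\tau_j}(c)$ to compute the new $\hat{l}(c)$; this scan is charged to the edges $\{c\}\times\textsc{Cluster}_{\mu,\tau_j}(c)$ simultaneously inserted into $H^s_{\mu,\tau_j,\epsilon}$, so the total cost over all center-joins is $O(|E_{ALL}|)$. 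For dragging, I would commit at insertion time to $A(c,v)=c$ whenever $v \in \textsc{Cluster}_{\mu,\tau_j}(c)$; then each dragging round at $c$ scans only edges assigned to $c$, and since the round is triggered by a unit increase of $\hat{l}(c)$, each round contributes at most one unit to each $\Delta(c,v)/\epsilon\tau_j$, matching the per-edge budget.

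Third, the accounting. The dominant term in the running time comes from the standard consolidation scans: each unit increase of $\hat{l}(v)$ scans $L(v)$ and so contributes $|L(v)|\cdot\Delta(v)/\epsilon\tau_j$ in total at $v$; summing over $v$ yields the claimed $\sum_{(u,v) \in E_{ALL}}\Delta(u,v)/\epsilon\tau_j$. Decreases appear at near-light transitions and implicitly when a center joins $A_i$, but since $\Delta(u,v)$ is defined as the \emph{total} amount of estimate change of $A(u,v)$ it already absorbs any re-increases that follow a decrease; the purely additive $O(|E_{ALL}|)$ absorbs the one-time scans themselves. A bucket queue and direct array access give $O(1)$ enqueue/dequeue and constant worst-case query time for $\hat{l}(v)$.

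The main obstacle will be justifying the charge for dragging. The naive charge of $|\textsc{Core}(c)|$ per dragging round to the core vertices fails because nothing forces each core vertex to carry an incident edge that could absorb the cost, while the number of dragging rounds at $c$ is $\Theta(\Delta(c)/\epsilon\tau_j)$. The resolution is the pre-committed assignment above: every edge $(c,v)$ with $v \in \textsc{Cluster}_{\mu,\tau_j}(c)$ is assigned to $c$, so each dragging round exactly traverses the core portion of $L(c)$, and per edge the cost is one unit of $\Delta(c)/\epsilon\tau_j$. A secondary subtlety is that dragging at one center may cascade into estimate increases of another center lying in its core; this does not blow up the running time since each cascaded increase is charged to the cascaded center's own $\Delta(\cdot)/\epsilon\tau_j$ budget, but making this rigorous needs an induction on the cascade order within a single stage.
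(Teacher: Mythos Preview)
The paper does not supply its own proof of this lemma; it explicitly defers to \cite{bernstein2017deterministic} with the one-line remark that ``a careful inspection of their proof shows that it extends to our specific setting.'' So there is no in-paper argument to compare against beyond the Bernstein--Chechik construction.

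Your high-level plan is on track and matches that construction: rescale by $\epsilon\tau_j$, bucket queues, assignment-based accounting, and absorption of the occasional decreases into the total change $\Delta$. There is, however, a genuine gap in your data structure. You let $L(v)$ contain only the edges with $A(u,v)=v$ and claim that ``scanning $L(v)$'' suffices to recompute the right-hand side of the consolidation rule. It does not: the consolidation minimum $\min_{(u,v)\in E}\hat{l}(u)+w(u,v)$ ranges over \emph{all} edges incident to $v$, including those with $A(u,v)=u$, and such an edge lies in $L(u)$ rather than $L(v)$. If that edge is $v$'s unique minimizer your procedure never sees it and keeps raising $\hat{l}(v)$, so the implementation is not even correct, let alone efficient. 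The assignment trick is two-sided: in addition to $L(v)$, each vertex $v$ keeps a bucket heap over the edges for which $v$ is the \emph{non-assigned} endpoint, keyed by $\hat{l}(A(u,v))+w(u,v)$; whenever $\hat{l}(w)$ changes by one integer step, $w$ walks $L(w)$ and for every $(w,x)\in L(w)$ both reads $\hat{l}(x)+w(w,x)$ for its own local minimum and pushes a one-bucket key update of $(w,x)$ into $x$'s heap. The minimum at $v$ is then the smaller of the $L(v)$-scan and the top of $v$'s heap, and every touch of an edge is charged to a unit change of its owner's estimate. With this missing half restored, your accounting in the third paragraph goes through verbatim, and your treatment of the three non-monotone features (the near-light $-2\epsilon\tau_j$ decrease, the center-join re-initialization, and dragging with cascades) is sound.
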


Lemma \ref{lma:efficientalmostMES} is proved in  \cite{bernstein2017deterministic} (technically the statement of their lemma is slightly different but a careful inspection of their proof shows that it extends to our specific setting).

\begin{lemma}
\label{lma:totalrunningtime}
We can implement the almost-MES-tree $\mathcal{E}_j$ with total update time $\tilde{O}((n\mu + n^{1+ O(1/\sqrt{\log n})}d / \epsilon\tau_j) (1/\epsilon)^{O(\sqrt{\log n})})$.
\end{lemma}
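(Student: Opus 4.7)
The plan is to apply Lemma \ref{lma:efficientalmostMES} with a dynamic assignment $A$ that sends every cluster edge $(c,v)$ with $c\in A_i$ to its center endpoint $c$, every component edge $(c_{\text{comp}},v)$ to the ordinary-vertex endpoint $v$, and every source-edge $(s,v)$ to $v$. It then suffices to bound $|E_{ALL}|+\sum_{(u,v)\in E_{ALL}}\Delta(u,v)/\epsilon\tau_j$ on top of the emulator-maintenance bound supplied by Lemma \ref{lma:maintainEmu}.

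The edges assigned to an ordinary vertex are easy to account for. Each vertex is incident to at most one source edge, and by the halving bookkeeping inside the proof of Lemma \ref{lma:maintainEmu}, to at most $O(\log n)$ distinct component edges throughout the algorithm. Since Definition \ref{def:almostMES} removes any $v$ once its estimate exceeds $d$ and allows only one near-heavy-to-near-light decrease of $\hat{l}(v)$ per vertex, the total increase $I(v)$ of $\hat{l}(v)$ over its lifetime satisfies $I(v)=O(d)$. Consequently every such edge contributes at most $O(d/\epsilon\tau_j)$ to the scan cost, and this type of edge therefore contributes $\tilde O(nd/\epsilon\tau_j)$ in total.

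The main obstacle, and the reason the dragging mechanism of Definition \ref{def:almostMES} is required, is the cluster edges. Naively charging each center $c\in A_i$ with its $\leq n^{(i+1)/k}$ incident cluster edges, multiplied by its $O(d/\epsilon\tau_j)$ estimate-increases, and then summing over all $O(n)$ centers in $C_{\mu,\tau_j}$ and all $k+1=O(\sqrt{\log n})$ level-visits a center may undergo (Claim \ref{clm:maintainA}, Property \ref{prop:leaveA}), yields roughly $n^{2+1/k}d/\epsilon\tau_j$, which is too large by a factor of $n$. I instead argue amortized: every $\epsilon\tau_j$-increase of $\hat{l}(c)$ costs at most $n^{(i+1)/k}$ scans, which I distribute evenly among the $\geq n^{i/k}$ vertices of $\textsc{Core}_{\mu,\tau_j}(c)$ (Property \ref{prop:sparse} of Claim \ref{clm:maintainlevels}), depositing $n^{1/k}$ credits on each core vertex. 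The \uline{dragging} rule guarantees that after the stage, every $v\in\textsc{Core}_{\mu,\tau_j}(c)$ satisfies $\hat{l}(v)\geq \hat{l}(c)-\tau_j(1/\epsilon)^{k+1}$, so the total unamortized work charged to $c$ is controlled by $I(v)$ along with a level-dependent gap term. Because cores of active centers at the same level are disjoint (Claim \ref{clm:maintainA}), each $v$ belongs to the core of at most one center per level and thus to at most $k+1$ cores simultaneously, so its accumulated credits over the entire timeline are $O((k+1)\cdot n^{1/k}\cdot I(v)/\epsilon\tau_j)=O(k\cdot n^{1/k}\cdot d/\epsilon\tau_j)$. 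Summing over $v\in V$ yields a total cluster-edge scan cost of $O(n^{1+1/k}\cdot k\cdot d/\epsilon\tau_j)=n^{1+O(1/\sqrt{\log n})}\cdot d/\epsilon\tau_j$.

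Combining the two scan-cost bounds with the $|E_{ALL}|=\tilde O(n^{1+1/k})$ snapshot-then-lifetime estimate (a center $c$ enters each level at most once and contributes at most $n^{(i+1)/k}$ new cluster edges per entry, while the $|A_i|\leq n^{1-i/k}$ bound absorbs the density), and adding the emulator-maintenance cost $O((n\mu+m\tau_j)(1/\epsilon)^{k+1}+m\log^2 n/\log\log n)$ from Lemma \ref{lma:maintainEmu} together with the observation that $m\tau_j=\mu 2^j\epsilon^{k+4}\leq n\mu$, we obtain the claimed $\tilde O((n\mu+n^{1+O(1/\sqrt{\log n})}d/\epsilon\tau_j)(1/\epsilon)^{O(\sqrt{\log n})})$ total update time after choosing $k=\lfloor\sqrt{\log n}\rfloor$. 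The most delicate step in carrying out the full proof will be the precise bookkeeping of the dragging amortization across the \uline{consolidation} and re-evaluation passes within a single stage: ensuring that cascading drags fired by Definition \ref{def:almostMES}, together with the controlled decreases triggered by near-heavy-to-near-light transitions and by centers joining new sets $A_i$, do not inflate the per-vertex total increase $I(v)$ beyond $O(d)$ up to polylogarithmic and $(1/\epsilon)^{O(\sqrt{\log n})}$ factors.
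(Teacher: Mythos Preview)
Your approach is essentially the paper's: the same dynamic assignment (cluster edges to the active center, component edges to the ordinary endpoint), the same credit scheme that spreads the center's scan cost over its core via the dragging guarantee, and the same observation that each vertex lies in at most $k+1$ cores at any time. Two small discrepancies are worth noting. First, you assign the source edges $(s,v)$ to $v$ whereas the paper assigns them to $s$; both work, but the paper's choice kills the $\Delta$-term outright since $\hat{l}(s)\equiv 0$. Second, your bound $|E_{ALL}|=\tilde O(n^{1+1/k})$ appeals to the \emph{snapshot} inequality $|A_i|\le n^{1-i/k}$, which does not by itself control how many centers \emph{ever} enter $A_i$ over the whole run; the paper instead uses the lightness bound $|\textsc{Cluster}_{\mu,\tau_j}(c)|\le\mu$ together with $|C_{\mu,\tau_j}|=O(n/\tau_j)$ and one entry per level to get $|E_{ALL}|=O(kn\mu)$, which sits inside the $n\mu$ term of the final bound. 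The paper also makes the ``level-dependent gap term'' you mention explicit: every core vertex tracks $\hat{l}(c)$ only up to an additive $2\tau_j(1/\epsilon)^{k+1}$, so each (center, level) pair leaves an uncredited residue of at most $2\tau_j(1/\epsilon)^{k+1}\cdot\mu$ scans, which after dividing by $\epsilon\tau_j$ and summing contributes $O(n\mu(1/\epsilon)^{O(k)})$. Finally, the step you flag as most delicate is in fact short: because $c\in\textsc{Core}_{\mu,\tau_j}(c)$, the reset on joining $A_i$ lowers $\hat{l}(c)$ by at most $8\tau_j(1/\epsilon)^i$, so every estimate's lifetime total increase is at most $d+2\epsilon\tau_j+\sum_{i\le k}8\tau_j(1/\epsilon)^i\le d+\tau_j(1/\epsilon)^{k+1}=O(d)$.
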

\begin{proof}
Let us first construct a dynamic assignment $A$: we first recall that each edge $(u,v) \in E_{ALL}$ is either incident to an active center; to a component vertex or to $s$. In the first case, we assign the edge to the active center (or to an arbitrary endpoint if both endpoints are active centers). In the second case, we assign the edge to the endpoint in $V$. In the third case, we assign it to $s$. If an edge satisfies multiple cases, we handle it according to the first case that applies (i.e. if an edge has an endpoint $s$ and is also incident to an active center, we apply the assignment rule that we use if an active center is present). We use this assignment in lemma \ref{lma:efficientalmostMES}.

Let us first analyze the cost incurred by edges of one of the two latter edge types. Clearly, there are at most $O(n)$ edges in $E_{ALL}$ incident to $\{s\}$ (and not incident to an active center or a component vertex) since we only add each edge from $s$ to the initial graph and never re-insert it after removing it. Since $s$ always satisfies $\hat{l}(s) = 0$, the total cost incurred by these edges is thus $O(n)$. Further, observe that by construction of $H^s_{\mu, \tau_j, \epsilon}$ there are at most $O(n \log n)$ edges in $E_{ALL}$ incident to the component vertices. We further observe that these edges are incident to heavy vertices and the distance estimate $\hat{l}(v)$ of a vertex $v$ whilst being heavy is monotonically increasing. Since we no longer maintain $\hat{l}(v)$ after it reached distance $d$, the total cost incurred by edges of the second edge type is at most $O(n \log n + n \log n d /\epsilon\tau_j)$.

Finally, let us analyze the costs incurred by edges of the first edge type. Here, we point out that a careful inspection of definition \ref{def:almostMES} gives that any distance estimate $\hat{l}(v)$ is at most increased by an amount of $d + 1 + 2\epsilon\tau_j + \sum_{i \in [0,k]} 8\tau_j(1/\epsilon)^i \leq d + \tau_j(1/\epsilon)^{k+1}$ since $\hat{l}(v)$ can only be decreased due to becoming light or by entering some set $A_i$ (which it can only once during the course of the algorithm by claim \ref{clm:maintainA}) before it is removed once it has reached any value beyond $d$. 

Let us next consider the following accounting scheme: for every unit that $v$ has its distance estimate $\hat{l}(v)$ increased, it pays $n^{1/k}$ credits to every vertex $c \in A_i$ (for any $i$) with $v \in \textsc{Core}_{\mu, \tau_j}(c)$. It is straight-forward to bound the total amount of credits payed in this way by $kn^{1+ 1/k} (d + \tau_j(1/\epsilon)^{k+1})$ where we use that each vertex can only be in the core of one active center on any level by claim \ref{clm:maintainA}. 

With this in place, let us bound total cost induced by edges incident to active centers. We start by observing that for each active center $c \in A_i$, we have at most $\min\{\mu, n^{(i+1)/k}\}$ vertices in $\textsc{Cluster}_{\mu, \tau_j}(c)$ (by lightness and by definition of clusters). This immediately implies that we add $O(kn\mu)$ edges to $E_{ALL}$ of the first type. Now, let us focus on the stages $t_{enter}$ where $c$ enters $A_i$ and the stage $t_{left}$ where it leaves $A_i$ again. Let $\Delta_i(c)$ be the total amount of change in $\hat{l}(c)$ during these two stages. Let $\textsc{Core}_{\mu, \tau}^{FINAL}(c)$ be the core of $c$ at the stage $t_{left} - 1$. Then, $|\textsc{Core}_{\mu, \tau}^{FINAL}(c)| \geq n^{i/k}$ by claim \ref{clm:maintainlevels}. Now every vertex $v \in \textsc{Core}_{\mu, \tau}^{FINAL}(c)$ was initially at distance at most $(1/\epsilon)^{i}$ from $c$. Further, the \uline{dragging} step in the almost-MES-tree definition ensures that the vertices are at most at distance $\tau_j(1/\epsilon)^{k+1}$ at all stages in the interval $[t_{enter}, t_{left})$ from $c$. Thus, in particular, we have that at stage $t_{left} - 1$, they are still close to $c$. But then by the triangle inequality each vertex $v$ has increased its distance estimate during the interval $[t_{enter}, t_{left})$ by at least $\Delta_i(c) - (1/\epsilon)^{i} - \tau_j(1/\epsilon)^{k+1} < \Delta_i(c) - 2\tau_j(1/\epsilon)^{k+1}$. Thus, for all but $2\tau_j(1/\epsilon)^{k+1}$ increases of $\hat{l}(c)$ we can account for scanning the edges to each vertex in $\textsc{Cluster}_{\mu, \tau}(c)$ by using the credits payed by the vertices in $\textsc{Core}_{\mu, \tau}^{FINAL}(c)$ (remember that $\textsc{Core}_{\mu, \tau}^{FINAL}(c)$ is at most by factor $n^{1/k}$ smaller than $\textsc{Cluster}_{\mu, \tau}(c)$ at any stage). Since every vertex can only enter $A_i$ once, the total cost, not covered by the accounting scheme, amounts to at most $n 2\tau_j(1/\epsilon)^{k+1} \mu$. 
\end{proof}

\section{Putting it all together}
\label{sec:puttingTogether}

We let $d_{min} = 2\frac{m}{\epsilon^{k+5} \mu}$ and to obtain an distance estimate for a vertex $v \in V$ with $\mathbf{dist}(s,v) \leq d_{min}$ we maintain an ES-tree $\mathcal{E}_0$ from $s$ to depth $2\frac{m}{\epsilon^{k+5} \mu}$ that can return the answer. To handle larger distances, we maintain for each index $j \in [\lg d_{min}, \lg n]$ an almost-MES-tree $\mathcal{E}_j$ to depth $(1+\epsilon) 2^{j+1} + (1/\epsilon)^{k+2}$. Then, by lemma \ref{lma:stretchMES}, for every vertex $v \in V$ with $\mathbf{dist}_{G^*}(s,v) \in [2^j, 2^{j+1})$ there is a distance estimate $\hat{l}(v)$ in $\mathcal{E}_j$ that satisfies
\[
(1-\epsilon)\mathbf{dist}_{G^*}(s,v) \leq \hat{l}(v) \leq (1+2\epsilon)\mathbf{dist}_{G^*}(s,v).
\]
By dividing $\hat{l}(v)$ by $(1-\epsilon)$ and by re-scaling $\epsilon$ by a constant fraction, we obtain a $(1+\epsilon)$-approximate distance estimate. Finally, before we output this distance estimate, we check whether $s$ and $v$ are still in the same component of $G$, and if so we return the distance estimate and otherwise $\infty$.

Observe that by taking the distance estimate from the $\mathcal{E}_{j'}$ with the smallest $j'$ where $v$ is still in $\mathcal{E}_{j'}$, we are guaranteed to obtain a satisfying distance estimate (since it is guaranteed to be in $\mathcal{E}_j$ and since for any $j' \leq j$ the distance estimate improves).

Invoking lemma \ref{lma:maintainBalls} for $\mathcal{E}_0$ and lemma \ref{lma:totalrunningtime} for each $\mathcal{E}_j$, we obtain total running time,
\[
{O}\left( \left(\frac{m^2}{\mu} + n\mu + n^{1+ O(1/\sqrt{\log n})}d\right) (1/\epsilon)^{O(\sqrt{\log n})}\right).
\]
This term is optimized by choosing $\mu = m/\sqrt{n}$, where the term evaluates to 
\[
{O}(m n^{0.5 + O(1/\sqrt{\log n})} (1/\epsilon)^{O(\sqrt{\log n})}) = {O}(m n^{0.5 + o(1)}).
\]
Finally, we point out that the query time by finding the right index $j$ manually takes up to $O(\log n)$ time but by a standard technique (see for example  \cite{henzinger2016dynamic}) the distance estimates can be maintained in queues whose total running time is subsumed by our total update time bounds and which can be used to improve the query time to $O(1)$ in the worst-case. This establishes our main result, theorem \ref{thm:mainSSSP}.

\section{Conclusion}
\label{sec:conclusion}

In this article, we gave the first \emph{deterministic} data structure to solve the $(1+\epsilon)$-approximate single-source shortest path problem that achieves total update time $\tilde{O}(m n^{0.5 + o(1)})$ which is a polynomial improvement over all existing data structures for the problem for graph densities $m = O(n^{1.5 - o(1)})$. In doing so, we reached a natural barrier of all current approaches and presented various new techniques that might have applications beyond our algorithm.

Whilst our result manifest significant progress on the problem, it is still open whether the near-linear time algorithm for the problem in the randomized setting against an oblivious adversary can be matched \emph{deterministically} (or whether an algorithm can be developed to work against an adaptive adversary). If the running time can not be matched, it would be interesting to prove lower bounds although this requires extensive new lower bound techniques since all conditional lower bounds techniques for dynamic graph problems are designed against an oblivious adversary. 

Further, all sparse \emph{deterministic} algorithms for $(1+\epsilon)$-approximate SSSP currently have the drawback of not being able to return approximate shortest paths. Also our algorithm can not be extended to find shortest paths (at least not in a straight-forward manner) since we explicitly omit looking at dense subgraphs and therefore have no information about paths crossing them. However, this issue was recently resolved in dense graphs \cite{Chuzhoy:2019:NAD:3313276.3316320}.

Finally, it is also a major open question whether the total running time for $(1+\epsilon)$-approximate SSSP in \emph{directed} graphs can be improved and whether there exist deterministic algorithms that improve upon ES-trees. Recently, the same authors \cite{probstWulffNilsenDiSSSP} provided a significant improvement on the $O(mn^{0.9+o(1)})$ algorithm by Henzinger, Forster and Nanongkai \cite{henzinger2014sublinear, henzinger2015improved} that also works against an adaptive adversary, however, again, it cannot report paths. The simpler problem of maintaining single-source reachability was recently improved significantly by Chechik et al. \cite{chechik2016decremental} by a clever combination of techniques in  \cite{lkacki2011improved, roditty2016fully} and has since been further improved to near-optimality \cite{bernstein2019decremental}. However, again, it is wide open whether a deterministic algorithm beyond the ES-tree exists.

\paragraph{Acknowledgements}

The authors of the paper would like to thank Jacob Evald and Viktor Hansen for helpful comments on organization and correctness.


\printbibliography[heading=bibintoc] 

\pagebreak

\appendix

\section{Proof of Lemma \ref{lma:efficientCover}}
\label{sec:appendixCoverProof}

\begin{proof}
We again prove the claim for the values $\alpha = 2$ and $\gamma = 1$ but prove that if a vertex $v$ is $(\mu, 2\alpha^{j})$-\textit{light} then $p_j(v)$ is $(\mu, \alpha^{j})$-\textit{light}. 

We prove by induction. Since $C_0 = V$, the property is vacuously true for the base case since every vertex that is $(\mu, 2\alpha^{j})$-\textit{light} is $(\mu, \alpha^{j})$-\textit{light}. 

Let us take the induction step. For $j > 0$, we again use that every vertex that is $(\mu, 2^{j+1})$-\textit{light} is also $(\mu, 2^{j})$-\textit{light} and invoke the induction hypothesis to deduce that there exists a vertex $c' \in C_{j-1}$ with $\dist{v, c'} \leq 2^{j-1}$. We have that $B(c', 2^j) \subseteq B(v,2^{j+1})$ since every vertex $x \in B(c', 2^j)$ satisfies $\dist{x,v} \leq 2^{j-1} + 2^j \leq  2^{j+1}$. Thus, $c'$ satisfies the first property of $j$-eligibility. If the vertex $c'$ is not in $C_j$, then there is a vertex $c'' \in C_j$ at distance at most $2^j$ from $c'$ and we conclude that the distance from $c''$ to $v$ satisfies $\dist{v, c''} \leq \dist{v, c'} + \dist{c', c''} \leq 2^{j-1} + 2^{j-1} = 2^j$. It also follows that for every such $c'$ or $c''$, the vertex is $(\mu, 2^{j})$-\textit{light} since $v$ is $(\mu, 2^{j+1})$-\textit{light} and at distance at most $2^j$.

Let us now analyze the number of vertices that are in a set $C_{j}$. Any two vertices $u,v \in C_{j}$ are at distance $\dist[\mathcal{G}^*]{u,v} \geq 2^{j-1}$, thus the ball $B(u, 2^{j-1}/2)$ is disjoint from all balls $B(v, 2^{j-1}/2)$ for every $u,v \in V, v \neq u$. Since $\mathcal{G}^*$ is connected by definition, we have that $|B(v, 2^{j-1}/2)| \geq 2^{j-1}/2$ and by simple counting arguments, there can be at most $2n/2^{j-1} = 4n / 2^j$ such disjoint balls and therefore vertices in $C_j$. 

Let us finally bound the running time of the algorithm. For each level $j \in [0, \lfloor\lg(n)\rfloor]$, we run by the preceding analysis an ES-tree as described in \ref{lma:maintainBalls} from $O(n/2^j)$ vertices to depth $r = O(2^j)$. By lemma \ref{lma:maintainBalls}, we can detect whether a vertex is $(\mu, r)$-\textit{light} in $O(\mu r)$ time. We can thus monitor the first property of $j$-eligibility for a vertex in $O(\mu 2^j)$ time. Since we maintain ES-trees to monitor this property, we can also use them to keep count of incident vertices that are in $C_{j}$ without increasing the running time. To maintain $p_j(v) \in C_j$ for every vertex that is $(\mu, 2^{j+1})$-\textit{light}, we observe that by definition \ref{def:cover} and our preceding discussion, there is a vertex $c$ in $C_j$ at distance at most $2^j$ that is $(\mu, 2^j)$-\textit{light}. Thus, we can explicitly maintain the distance $\mathbf{dist}_{\mathcal{G}}(v,c)$ by using the ES-tree of $c$. We store at each $v$ a list for each such center where it is contained in the ES-tree and maintain a increasing order by distance to $v$. If the distance to a center increases by one then the ES-tree spends at least constant time and we use an additional constant time to check whether another vertex in the ordered list is now closer to $v$ and if so assign it to $p_j(v)$. We thus conclude that the overall time can be bound by $O(\sum_{j = 0}^{\lg n} \mu 2^{j+1} n/2^j) = O(\mu n  \log{n})$.

To maintain a layered $(\mu, \alpha, \gamma)$-cover $\mathcal{C}'= \{C'_0, C'_1, .., C'_{\lfloor\log_{\alpha}(n/\gamma)\rfloor}\}$ for any $\alpha \geq 2$ and $\gamma \geq 1$, we can again simply maintain a layered $(\mu, 2, 1)$-cover $\mathcal{C}= \{C_0, C_1, .., C_{\lfloor\log_{\alpha}(n)\rfloor}\}$ and map each index $j \in [0,\lfloor\log_{\alpha}(n/\gamma)\rfloor]$ to $j' = \lfloor j * \log_{2}(\alpha) \rfloor$ and return the cover $C_{j'}$. 
\end{proof}

\end{document}